\theoremstyle{plain}
\newtheorem{thm}{\protect\theoremname}
  \theoremstyle{remark}
  \newtheorem{rem}[thm]{\protect\remarkname}
  \theoremstyle{definition}
  \newtheorem{defn}[thm]{\protect\definitionname}
  \theoremstyle{plain}
  \newtheorem{prop}[thm]{\protect\propositionname}
  \theoremstyle{plain}
  \newtheorem{lem}[thm]{\protect\lemmaname}
  \theoremstyle{plain}
  \newtheorem{cor}[thm]{\protect\corollaryname}
\journal{Systems and  Control Letters}
  \providecommand{\corollaryname}{Corollary}
  \providecommand{\definitionname}{Definition}
  \providecommand{\lemmaname}{Lemma}
  \providecommand{\propositionname}{Proposition}
  \providecommand{\remarkname}{Remark}
\providecommand{\theoremname}{Theorem}
\begin{document}

\begin{frontmatter}{}

\title{Distributed Kalman Filter in a Network of Linear Dynamical Systems}

\author[gua,arg]{Damián Marelli}

\ead{Damian.Marelli@newcastle.edu.au}

\author[new]{Mohsen Zamani}

\ead{Mohsen.Zamani@newcastle.edu.au}

\author[new,gua]{Minyue Fu}

\ead{Minyue.Fu@newcastle.edu.au}

\address[gua]{Department of Control Science and Engineering and State Key Laboratory
of Industrial Control Technology, Zhejiang University, 388 Yuhangtang
Road Hangzhou, Zhejiang Province, 310058, P. R. China.}

\address[arg]{French-Argentinean International Center for Information and Systems
Sciences, National Scientific and Technical Research Council, Ocampo
Esmeralda, Rosario 2000, Argentina.}

\address[new]{School of Electrical Engineering and Computer Science, University
of Newcastle, University Drive, Callaghan, NSW 2308, Australia.}
\begin{abstract}
This paper is concerned with the problem of distributed Kalman filtering
in a network of interconnected subsystems with distributed control
protocols. We consider networks, which can be either homogeneous or
heterogeneous, of linear time-invariant subsystems, given in the state-space
form. We propose a distributed Kalman filtering scheme for this setup.
The proposed method provides, at each node, an estimation of the
state parameter, only based on locally available measurements and
those from the neighbor nodes. The special feature of this method
is that it exploits the particular structure of the considered network
to obtain an estimate using only one prediction/update step at each
time step. We show that the estimate produced by the proposed method
asymptotically approaches that of the centralized Kalman filter, i.e.,
the optimal one with global knowledge of all network parameters, and
we are able to bound the convergence rate. Moreover, if the initial
states of all subsystems are mutually uncorrelated, the estimates
of these two schemes are identical at each time step. 
\end{abstract}
\begin{keyword}
Estimation, Kalman Filter, Distributed Systems. 
\end{keyword}

\end{frontmatter}{}

\section{Introduction}

There has been an increasing activity in the study of distributed
estimation in a network environment. This is due to its broad applications
in many areas, including formation control~\citet{subbotin2009distributed,lin2014distributed},
distributed sensor network~\citet{zhang2001stability} and cyber
security~\citet{teixeira2015secure,zamaniautomatic2015}. This paper
examines the problem of distributed estimation in a network of subsystems
represented by a finite dimensional state-space model. Our focus is
on the scenario where each subsystem obtains some noisy measurements,
and broadcasts them to its nearby subsystems, called \emph{neighbors}.
The neighbors exploit the received information, together with an estimate
of their internal states, to make a decision about their future states.
This sort of communication coupling arises in different applications.
For example, in control system security problems~\citet{teixeira2015secure},
distributed state estimation is required to calculate certain estimation
error residues for attack detection. Similarly, for formation control
\citet{lin2016graph,zheng2015distributed,lin2016necessary}, each
subsystem integrates measurements from its nearby subsystems, and
states of each subsystem need to be estimated for distributed control
design purposes. The main objective of this paper is to collectively
estimate the states of all subsystems within such a network. We will
propose a novel distributed version of the celebrated Kalman filter.

The current paper, in broad sense, belongs to the large body of literature
regarding distributed estimation. One can refer to~\citet{lopes2008diffusion,kar2012distributed,conejo2007optimization,gomez2011taxonomy,marellidistributed2015,olfati2005distributedkf,ugrinovskii2011robustestimate,ugrinovskii2013robustestimate,zamani2014minimumenergy,khan2008distributedkalmanfilter,olfati2009kalmanstability}
and the survey paper~\citet{ribeiro2010kalman}, as well as references
listed therein, for different variations of distributed estimation
methods among a group of subsystems within a network. A consensus
based Kalman filter was proposed in~\citet{olfati2005distributedkf}.
The author of~\citet{ugrinovskii2011robustestimate} utilized a linear
matrix inequality to minimize a $H_{\infty}$ index associated with
a consensus based estimator, which can be implemented locally. Some
of the results there were then extended to the case of switching topology
in~\citet{ugrinovskii2013robustestimate}. The same problem was solved
using the minimum energy filtering approach in~\citet{zamani2014minimumenergy}.
A common drawback of the state estimation methods described above
is that, being based on consensus, they require, in theory, an infinite
number of consensus iterations at each time step. This results in
computational and communication overload. To avoid this, in this paper
we exploit the network structure to achieve a distributed Kalman filter
method which requires only one prediction/update step at each time
step. Moreover, it is worthwhile noting that there is a major difference
between the above-mentioned works and the problem formulation in the
current paper. More precisely, in the former, the aim of each subsystem
is to estimate the aggregated state which is common to all subsystems.
In contrast, in the problem studied here, each subsystem is dedicated
to the estimation of its own internal state, which in general is different
from those of other subsystems. This allows the distributed estimation
algorithm to be scalable to networked systems with a large number
of subsystems where requiring each subsystem to estimate the aggregated
state is both computationally infeasible and practically unnecessary.

To show the effectiveness of the proposed algorithm, we compare our
method with the classical (centralized) Kalman filter, which is known
to be optimal (in the minimum error covariance sense). The classical
method requires the simultaneous knowledge of parameters and measurements
from all subsystems within the network to carry out the estimation.
In contrast, our proposed distributed estimation algorithm runs a
local Kalman filter at each subsystem, which only requires the knowledge
of local measurements and parameters, as well as measurements from
neighbor subsystems. Hence, it can be implemented in a fully distributed
fashion. We show that the state estimate, and its associated estimation
error covariance matrix, produced by the proposed distributed method
asymptotically converge to those produced by the centralized Kalman
filter. We provide bounds for the convergence of both the estimate
and the estimation error covariance matrix. A by-product of our result
is that, if the initial states of all subsystems are uncoupled (i.e.,
they are mutually uncorrelated), the estimates produced by our method
are identical to that of the centralized Kalman filter.

The rest of the paper is structured as follows. In Section~\ref{sec:Problem-setup}, 
we describe the network setup and its associated centralized Kalman
filter. In Section~\ref{sec:Distributed-Kalman-Filter}, we describe
the proposed distributed Kalman filter scheme. In Section~\ref{sec:Optimality-analysis}, 
we demonstrate the asymptotic equivalence between the proposed distributed
filter and the centralized one, and provide bounds for the convergence
of the estimates and their associated estimation error covariances.
Simulation results that support our theoretical claims are presented
in Section~\ref{sec:Simulations}. Finally, concluding remarks are
given in Section~\ref{sec:Conclusion}.

\section{System Description\label{sec:Problem-setup}}

In this paper we study networks of $I$ linear time-invariant subsystems.
Subsystem $i$ is represented by the following state-space model 
\begin{eqnarray}
x_{k+1}^{(i)} & = & A^{(i)}x_{k}^{(i)}+z_{k}^{(i)}+w_{k}^{(i)},\label{eq:sys1}\\
y_{k}^{(i)} & = & C^{(i)}x_{k}^{(i)}+v_{k}^{(i)}.\label{eq:sys2}
\end{eqnarray}
The subsystems are interconnected as follows 
\begin{equation}
z_{k}^{(i)}=\sum_{j\in\mathcal{N}_{i}}L^{(i,j)}y_{k}^{(j)},\label{eq:sys3}
\end{equation}
where $x_{k}^{(i)}\in\mathbb{R}^{n_{i}}$ is the state, $y_{k}^{(i)}\in\mathbb{R}^{p_{i}}$
the output, $w_{k}^{(i)}$ is an i.i.d Gaussian disturbance process
with $w_{k}^{(i)}\sim\mathcal{N}\left(0,Q_{i}\right)$, and $v_{k}^{(i)}$
is an i.i.d.~Gaussian measurement noise process with $v_{k}^{(i)}\sim\mathcal{N}\left(0,R_{i}\right)$.
We further suppose that $\mathcal{E}\left(w_{k}^{(i)}w_{k}^{(j)^{\top}}\right)=0$
and $\mathcal{E}\left(v_{k}^{(i)}v_{k}^{(j)^{\top}}\right)=0$, $\forall i\ne j$
and $\mathcal{E}\left(x_{k}^{(i)}w_{k}^{(j)^{\top}}\right)=0$, $\mathcal{E}\left(x_{k}^{(i)}v_{k}^{(j)^{\top}}\right)=0$
$\forall i,j$. We also denote the neighbor set of the subsystem $i$
by $\mathcal{N}_{i}=\left\{ j:L^{(i,j)}\neq0\right\} $.
\begin{rem}
We note in (\ref{eq:sys1})-(\ref{eq:sys2}) that the coupling between
neighboring subsystems is solely caused through the $z_{k}^{(i)}$
term in (\ref{eq:sys3}). The main motivation for considering such
coupling comes from distributed control where (\ref{eq:sys1}) represents
the model of an autonomous subsystem (or agent) with the $z_{k}^{(i)}$
being the control input and that (\ref{eq:sys3}) represents a distributed
control protocol which employs feedback only from neighboring measurements.
This type of distributed control is not only common for control of
multi-agent systems (see, for example,~\citep{lin2014distributed,lin2016graph,lin2016necessary,zheng2015distributed}),
but also realistic for large networked systems in the sense that only
neighbouring information is both easily accessible and most useful
for each subsystem. We emphasize that the distributed state estimation
problem arises for the networked system (\ref{eq:sys1})-(\ref{eq:sys3})
because of our allowance for measurement noises $v_{k}^{(i)}$ in
(\ref{eq:sys2}). This consideration is very important for applications
because measurement noises are unavoidable in practice. This also
sharply distinguishes our distributed control formulation from most
distributed control algorithms in the literature where perfect state
measurement is often implicitly assumed. 

We define $\xi_{k}^{\top}=\left[\left(\xi_{k}^{(1)}\right)^{\top},\cdots,\left(\xi_{k}^{(I)}\right)^{\top}\right]$
and $\Xi_{k}=\left\{ \xi_{1},\cdots,\xi_{k}\right\} $, where $(\xi,\Xi)$
stands for either $(x,X)$, $(y,Y)$, $(z,Z)$, $(w,W)$ or $(v,V)$;
moreover, we denote $\Upsilon=\mathrm{diag}\left\{ \Upsilon^{(1)},\cdots,\Upsilon^{(I)}\right\} $,
where $\Upsilon$ stands for either $A$, $B$, $C$, $Q$ or $R$,
and $L=\left[L^{(i,j)}:i,j=1,\cdots,I\right]$.
\end{rem}

Using the above notation, we let the initial state $x_{1}$ of all
subsystems have the joint distribution $x_{0}\sim\mathcal{N}\left(0,P\right)$.
We can also write the aggregated model of the whole network as 
\begin{eqnarray}
x_{k+1} & = & Ax_{k}+LCx_{k}+w_{k}+BLv_{k}\nonumber \\
 & = & \tilde{A}x_{k}+e_{k},\label{eq:ss1}\\
y_{k} & = & Cx_{k}+v_{k},\label{eq:ss2}
\end{eqnarray}
with 
\begin{equation}
\tilde{A}=A+LC\quad\text{ and }\quad e_{k}=w_{k}+Lv_{k}.\label{eq:Atilde}
\end{equation}
It then follows that 
\begin{equation}
\mathrm{cov}\left(\left[\begin{array}{c}
e_{k}\\
v_{k}
\end{array}\right]\left[\begin{array}{cc}
e_{k}^{\top} & v_{k}^{\top}\end{array}\right]\right)=\left[\begin{array}{cc}
\tilde{Q} & \tilde{S}\\
\tilde{S}^{\top} & R
\end{array}\right],\label{eq:correlated-noise}
\end{equation}
where $\tilde{Q}=Q+LRL^{\top}$ and $\tilde{S}=LR$.

\section{Centralized Kalman Filter}

Consider the standard (centralized) Kalman filter. For all $k,l\in\mathbb{N}$,
let

\begin{equation}
\begin{split}\hat{x}_{k|l} & \triangleq\mathcal{E}\left(x_{k}|Y_{l}\right),\\
\Sigma_{k|l} & \triangleq\mathcal{E}\left(\left[x_{k}-\hat{x}_{k|l}\right]\left[x_{k}-\hat{x}_{k\mid l}\right]^{\top}\right),
\end{split}
\label{eq:mllast}
\end{equation}
and $\Sigma_{0|0}=P$. Our aim in this subsection is to compute $\hat{x}_{k|k}$
in a standard centralized way. Notice that equation~(\ref{eq:correlated-noise})
implies that, in the aggregated system formed by~(\ref{eq:sys1})-(\ref{eq:sys2}),
the process noise $e_{k}$ and the measurement noise $v_{k}$ are
mutually correlated. Taking this into account, it follows from~\citep[S 5.5]{anderson-moore1979}
that the prediction and update steps of the (centralized ) Kalman
filter are given by: 
\begin{enumerate}
\item \textbf{Prediction:} 
\begin{equation}
\begin{split}\hat{x}_{k+1|k} & =\left(\tilde{A}-\tilde{S}R^{-1}C\right)\hat{x}_{k|k}+\tilde{S}R^{-1}y_{k}\\
 & =A\hat{x}_{k|k}+Ly_{k},
\end{split}
\label{eq:KP1-1}
\end{equation}
and 
\end{enumerate}
\begin{equation}
\begin{split}\Sigma_{k+1|k} & =\left(M-\tilde{S}R^{-1}C\right)\Sigma_{k|k}\left(M-\tilde{S}R^{-1}C\right)^{\top}\\
 & +\tilde{Q}-\tilde{S}R^{-1}\tilde{S}\\
 & =A\Sigma_{k|k}A^{\top}+Q.
\end{split}
\label{eq:KP2-1}
\end{equation}
\begin{enumerate}
\item \textbf{Update:} 
\begin{align}
\hat{x}_{k|k} & =\hat{x}_{k|k-1}+K_{k}\left(y_{k}-C\hat{x}_{k|k-1}\right),\label{eq:KU1-1}\\
\Sigma_{k|k} & =\left(I-K_{k}C\right)\Sigma_{k|k-1},\label{eq:KU2-1}
\end{align}
with 
\begin{equation}
K_{k}=\Sigma_{k|k-1}C^{\top}\left(C\Sigma_{k|k-1}C^{\top}+R\right)^{-1}.\label{eq:KG-1}
\end{equation}
\end{enumerate}

\section{Distributed Kalman Filter\label{sec:Distributed-Kalman-Filter}}

Consider the $i$-th subsystem~(\ref{eq:sys1})-(\ref{eq:sys2}).
Notice that, since the measurements $y_{k}^{(j)}$, $j\in\mathcal{N}_{i}$,
are known by the $i$-th subsystem, they can be treated as external
inputs. This observation leads us to the following intuitive approach
for a distributed Kalman filter scheme.

Let, for all $i=1,\cdots,I$ and $k,l\in\mathbb{N}$,

\begin{equation}
\begin{split}\hat{x}_{k|l}^{(i)} & \triangleq\mathcal{E}\left(x_{k}^{(i)}|y_{m}^{(j)};j\in\mathcal{N}_{i}\cup\{i\},m=1,\cdots,l\right),\\
\Sigma_{k|l}^{(i)} & \triangleq\mathcal{E}\left(\left[x_{k}^{(i)}-\hat{x}_{k|l}^{(i)}\right]\left[x_{k}^{(i)}-\hat{x}_{k|l}^{(i)}\right]^{\top}\right),
\end{split}
\label{eq:mllast-1}
\end{equation}
and $\Sigma_{0|0}^{(i)}=P^{(i)}$. Then, the prediction and update
steps for the proposed distributed Kalman filter are: 
\begin{enumerate}
\item \textbf{Prediction:} 
\begin{eqnarray}
\hat{x}_{k+1|k}^{(i)} & = & A^{(i)}\hat{x}_{k|k}^{(i)}+\sum_{j\in\mathcal{N}_{i}}L_{i,j}^{(i,j)}y_{k}^{(j)},\label{eq:DKP1}\\
\Sigma_{k+1|k}^{(i)} & = & A^{(i)}\Sigma_{k|k}^{(i)}A^{(i)^{\top}}+Q^{(i)}.\label{eq:DKP2}
\end{eqnarray}
\item \textbf{Update:} 
\begin{align}
\hat{x}_{k|k}^{(i)} & =\hat{x}_{k|k-1}^{(i)}+K_{k}^{(i)}\left(y_{k}^{(i)}-C^{(i)}\hat{x}_{k|k-1}^{(i)}\right),\label{eq:DKU1}\\
\Sigma_{k|k}^{(i)} & =\left(I-K_{k}^{(i)}C^{(i)}\right)\Sigma_{k|k-1}^{(i)},\label{eq:DKU2}
\end{align}
with 
\begin{equation}
K_{k}^{(i)}=\Sigma_{k|k-1}^{(i)}C^{(i)^{\top}}\left(C^{(i)}\Sigma_{k|k-1}^{(i)}C^{(i)^{\top}}+R^{(i)}\right)^{-1}.\label{eq:DKG}
\end{equation}
\end{enumerate}

\section{Optimality analysis\label{sec:Optimality-analysis}}

Since the distributed Kalman filter approach given in Section~\ref{sec:Distributed-Kalman-Filter}
is motivated by intuition, the question naturally arises as to which
extent it is optimal. In this section we address this question. To
this end, we define $\left(\hat{x}_{k|l}^{\star},\Sigma_{k|l}^{\star}\right)$,
where $\hat{x}_{k|l}^{\star\top}=\left[\hat{x}_{k|l}^{(i)\top}:i=1,\cdots,N\right]$
and $\Sigma_{k|l}^{\star}=\mathrm{diag}\left(\Sigma_{k|l}^{(i)}:i=1,\cdots,N\right)$,
to be the outcomes of distributed filter and $\left(\hat{x}_{k|l},\Sigma_{k|l}\right)$
to be those of centralized one. In Section~\ref{subsec:Convergence-of-covariance}, we show that the estimation error covariance of the distributed filter
$\Sigma_{k|k}^{\star}$ converges to that of the centralized one $\Sigma_{k|k}$,
and provide a bound for this convergence. In Section~\ref{subsec:Convergence-of-estimates}, we do the same for the convergence of $\hat{x}_{k|k}^{\star}$ to
$\hat{x}_{k|k}$.

\subsection{\label{subsec:Convergence-of-covariance} Convergence of $\Sigma_{k|k}^{\star}$
to $\Sigma_{k|k}$}

In this section, we show that the covariance matrices $\Sigma_{k|k}$
and $\Sigma_{k|k}^{\star}$ exponentially converge to each other,
and introduce a bound on $\left\Vert \Sigma_{k|k}-\Sigma_{k|k}^{\star}\right\Vert $.
To this end, we require the following definition from~\citep[Def 1.4]{bougerol1993kalman}. 
\begin{defn}
For $n\times n$ matrices $P,Q>0$, the Riemannian distance is defined
by 
\[
\delta\left(P,Q\right)=\sqrt{\sum_{k=1}^{n}\log^{2}\sigma_{k}\left(PQ^{-1}\right)},
\]
where $\sigma_{1}\left(X\right)\geq\cdots\geq\sigma_{n}\left(X\right)$
denote the singular values of matrix $X$. 

Several properties of the above definition, which we use to derive
our results, are given in the following proposition. 
\end{defn}

\begin{prop}
\label{prop:rim-dist}\citep[Proposition 6]{SuiMarelliFuBP} For $n\times n$
matrices $P,Q>0$, the following holds true: 
\begin{enumerate}
\item $\delta(P,P)=0$. 
\item $\delta\left(P^{-1},Q^{-1}\right)=\delta\left(Q,P\right)=\delta\left(P,Q\right).$ 
\item For any $m\times m$ matrix $W>0$ and $m\times n$ matrix $B$, we
have 
\[
\delta(W+BPB^{\top},W+BQB^{\top})\leq\frac{\alpha}{\alpha+\beta}\delta(P,Q),
\]
where $\alpha=\max\{\|BPB^{\top}\|,\|BQB^{\top}\|\}$ and $\beta=\sigma_{\min}\left(W\right)$. 
\item If $P>Q$, then $\left\Vert P-Q\right\Vert \leq\left(e^{\delta\left(P,Q\right)}-1\right)\left\Vert Q\right\Vert .$ 
\end{enumerate}
\end{prop}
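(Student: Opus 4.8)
The plan is to verify the four items directly from the definition $\delta(P,Q)=\sqrt{\sum_{k=1}^{n}\log^{2}\sigma_{k}(PQ^{-1})}$, since the proposition is quoted from \citep{SuiMarelliFuBP} but deserves a self-contained argument. Item~1 is immediate: $PP^{-1}=I$ has all singular values equal to $1$, so every $\log\sigma_k$ vanishes. For item~2, the key observation is that the nonzero eigenvalues of $PQ^{-1}$ coincide with those of the symmetric positive-definite matrix $Q^{-1/2}PQ^{-1/2}$ (they are similar), so the singular values of $PQ^{-1}$ are actually its eigenvalues, i.e.\ the eigenvalues of $Q^{-1/2}PQ^{-1/2}$. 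From this, $\delta(P,Q)^2=\sum_k\log^2\lambda_k(Q^{-1/2}PQ^{-1/2})$; replacing $P\leftrightarrow Q$ inverts each $\lambda_k$ and hence only flips the sign of each $\log\lambda_k$, leaving the sum of squares unchanged, and the same substitution handles $\delta(P^{-1},Q^{-1})$ because $(P^{-1})(Q^{-1})^{-1}=P^{-1}Q$ is similar to $QP^{-1}$.

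The substantive item is~3, the contraction estimate under the map $P\mapsto W+BPB^\top$; I expect this to be the main obstacle. The approach I would take is the standard one for such bounds: reduce to a scalar inequality along a geodesic. Writing $\delta(P,Q)$ in terms of the eigenvalues $\lambda_k$ of the pencil $(P,Q)$, and using the integral/midpoint characterization of the Riemannian (affine-invariant) metric on the positive-definite cone, it suffices to control how the map acts on the one-parameter family $Q\#_t P$ (the geodesic from $Q$ to $P$). The differential of $P\mapsto W+BPB^\top$ is $H\mapsto BHB^\top$, and one bounds its operator norm with respect to the Riemannian metrics at $P$ and at $W+BPB^\top$; the factor $\tfrac{\alpha}{\alpha+\beta}$ with $\alpha=\max\{\|BPB^\top\|,\|BQB^\top\|\}$ and $\beta=\sigma_{\min}(W)$ emerges because adding the fixed positive term $W$ with smallest eigenvalue $\beta$ to something of size at most $\alpha$ shrinks relative perturbations by at least $\beta/(\alpha+\beta)$. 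I would present this as: for any tangent direction, $\langle BHB^\top,(W+BPB^\top)^{-1}BHB^\top(W+BPB^\top)^{-1}\rangle \le \big(\tfrac{\alpha}{\alpha+\beta}\big)^2\langle H,P^{-1}HP^{-1}\rangle$, then integrate along the geodesic. Alternatively, if one prefers to avoid differential geometry, the same result follows from the monotonicity and concavity properties of the map together with Birkhoff-type contraction arguments in Hilbert's projective metric, but the Riemannian route is cleaner here.

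Finally, item~4 is an elementary consequence of the definition once $P>Q$. Since $P>Q>0$, every eigenvalue of $Q^{-1/2}PQ^{-1/2}$ exceeds $1$, so each $\log\sigma_k(PQ^{-1})\ge 0$ and therefore $\max_k\log\sigma_k(PQ^{-1})\le\sqrt{\sum_k\log^2\sigma_k(PQ^{-1})}=\delta(P,Q)$. Hence $\sigma_{\max}(PQ^{-1})=\|PQ^{-1}\|\le e^{\delta(P,Q)}$, which gives $\|P-Q\|=\|(PQ^{-1}-I)Q\|\le\|PQ^{-1}-I\|\,\|Q\|\le(e^{\delta(P,Q)}-1)\|Q\|$, using that $PQ^{-1}-I$ is positive so its norm is its largest eigenvalue $\sigma_{\max}(PQ^{-1})-1$. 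The only mild care needed is the similarity reduction in item~2, which is reused throughout; everything else is routine, so in the write-up I would prove item~2's similarity fact once as a preliminary remark and then invoke it freely.
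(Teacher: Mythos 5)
First, note that the paper itself gives no proof of this proposition; it is quoted directly from \citep[Proposition 6]{SuiMarelliFuBP}, so your argument can only be judged against the standard proofs for the Riemannian (affine-invariant) distance on the positive-definite cone. Judged that way, there are genuine gaps. The ``preliminary remark'' you intend to reuse throughout is wrong as stated: similarity of $PQ^{-1}$ to $Q^{-1/2}PQ^{-1/2}$ preserves eigenvalues, \emph{not} singular values, and $PQ^{-1}$ is not normal in general, so its singular values do not coincide with its (positive) eigenvalues. The distance intended in \citep{bougerol1993kalman} and \citep{SuiMarelliFuBP} is built from the eigenvalues of $PQ^{-1}$ (equivalently of $Q^{-1/2}PQ^{-1/2}$); your conclusions in item~2 survive under that reading, but the stated justification does not. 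The conflation becomes fatal in item~4: $PQ^{-1}-I$ is not symmetric, so ``its norm is its largest eigenvalue'' is unjustified, and under the literal singular-value reading it is not even true that $P>Q$ forces all $\sigma_k(PQ^{-1})\ge 1$ (take $Q$ ill-conditioned, e.g.\ $Q=\mathrm{diag}(100,1)$ and $P=Q^{1/2}MQ^{1/2}$ with $M>I$ having small off-diagonal entries; then $\sigma_{\min}(PQ^{-1})<1$). The correct route is by congruence rather than by the one-sided product: with $M=Q^{-1/2}PQ^{-1/2}$, $P>Q$ gives $0<P-Q=Q^{1/2}(M-I)Q^{1/2}\le(\lambda_{\max}(M)-1)\,Q$, hence $\|P-Q\|\le(\lambda_{\max}(M)-1)\|Q\|\le(e^{\delta(P,Q)}-1)\|Q\|$, using only $\log\lambda_{\max}(M)\le\delta(P,Q)$ and monotonicity of the norm on $0\le X\le Y$.

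Item~3, which you yourself flag as the substantive part, is a plan rather than a proof: the key pointwise inequality is asserted, not established, and the ``integrate along the geodesic'' step silently uses two further facts that must be stated: that $\delta$ is the geodesic distance of the affine-invariant metric (an imported fact, not the bare definition you start from), and that along the geodesic $G_t=Q\#_tP$ one has $\|BG_tB^{\top}\|\le\alpha$, which follows from the operator AM--GM inequality $Q\#_tP\le(1-t)Q+tP$ but is nowhere mentioned, yet is exactly what lets the pointwise factor be replaced by the uniform $\alpha/(\alpha+\beta)$. The pointwise bound itself can be closed cleanly: writing $H=P^{1/2}KP^{1/2}$, $S=BP^{1/2}$ and $T=(W+SS^{\top})^{-1/2}S$, the inequality $\beta SS^{\top}\le\|SS^{\top}\|\,W$ gives $TT^{\top}\le\frac{a}{a+\beta}I$ with $a=\|BPB^{\top}\|$, and $\|TKT^{\top}\|_F\le\|T\|^{2}\|K\|_F$ yields the contraction of the differential. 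With these pieces supplied, and with item~2 and item~4 repaired as above (working with eigenvalues of $Q^{-1/2}PQ^{-1/2}$ throughout), your outline can be completed; as written, however, items~2--4 all rest either on the eigenvalue/singular-value conflation or on unproven assertions.
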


The main result of this section is given in Theorem~\ref{thm:conv-Sigma}.
Its proof requires the following technical result.
\begin{lem}
\label{lem:matrix}Let $\Gamma_{k|l}=\Sigma_{k|l}^{-1}$ and $\Gamma_{k|l}^{\star}=\Sigma_{k|l}^{\star^{-1}}$.
Then 
\begin{eqnarray*}
\left\Vert \Sigma_{k|k}\right\Vert ,\left\Vert \Sigma_{k|k}^{\star}\right\Vert  & \leq & \sigma,\\
\left\Vert \Gamma_{k|k}\right\Vert ,\left\Vert \Gamma_{k|k}^{\star}\right\Vert  & \leq & \omega,
\end{eqnarray*}
and 
\begin{eqnarray}
\delta\left(\Sigma_{k|k},\Sigma_{k|k}^{\star}\right) & \leq & \upsilon^{k}\delta\left(P,P^{\star}\right),\label{eq:aux3}\\
\delta\left(\Gamma_{k|k},\Gamma_{k|k}^{\star}\right) & \leq & \upsilon^{k}\delta\left(P,P^{\star}\right),\label{eq:aux4}
\end{eqnarray}
where 
\begin{eqnarray}
\sigma & = & \max\left\{ \left\Vert P\right\Vert ,\left\Vert P^{\star}\right\Vert ,\left\Vert \bar{\Sigma}\right\Vert \right\} ,\label{eq:sigma}\\
\omega & = & \max\left\{ \left\Vert P^{-1}\right\Vert ,\left\Vert P^{\star^{-1}}\right\Vert ,\left\Vert \bar{\Sigma}^{-1}\right\Vert \right\} ,\label{eq:gamma}
\end{eqnarray}
with $P^{\star}$ denoting the diagonal matrix formed by the block
diagonal entries of the matrix $P$, 
\begin{eqnarray}
\upsilon & = & \upsilon_{1}\upsilon_{1},\qquad\upsilon_{2}=\frac{\sigma\left\Vert A\right\Vert ^{2}}{\sigma\left\Vert A\right\Vert ^{2}+\left\Vert Q^{-1}\right\Vert ^{-1}},\label{eq:upsilon}\\
\upsilon_{2} & = & \frac{\omega}{\omega+\left\Vert U^{-1}\right\Vert ^{-1}},\qquad U=C^{\top}R^{-1}C,\nonumber 
\end{eqnarray}
and $\bar{\Sigma}=\lim_{k\rightarrow\infty}\Sigma_{k|k}$.
\end{lem}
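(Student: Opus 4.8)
The plan is to exploit the fact that the distributed and centralized error covariances obey one and the \emph{same} Riccati recursion, differing only through their initial conditions, and then to read off a uniform per-step contraction of the Riemannian distance from Proposition~\ref{prop:rim-dist}(3).

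First I would record the structural observation. Since $A$, $C$, $Q$, $R$ are all block diagonal, block diagonality is preserved by the one-step map $\Sigma\mapsto(I-KC)(A\Sigma A^{\top}+Q)$; hence the distributed recursion \eqref{eq:DKP2}--\eqref{eq:DKG} is precisely the restriction of the centralized recursion \eqref{eq:KP2-1}--\eqref{eq:KG-1} to block-diagonal matrices, and $\Sigma_{k|k}$, $\Sigma_{k|k}^{\star}$ are the $k$-th iterates of the very same Riccati operator started from $P$ and from $P^{\star}$ respectively; in particular $\Sigma_{k|k}^{\star}$ converges to the block-diagonal steady state $\bar\Sigma^{\star}$. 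Throughout I will use the information-form identities $\Gamma_{k|k}=\Gamma_{k|k-1}+U$ (from the matrix-inversion lemma applied to \eqref{eq:KU2-1}--\eqref{eq:KG-1}) and $\Gamma_{k|k-1}=(A\Sigma_{k-1|k-1}A^{\top}+Q)^{-1}$.

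Next, the uniform bounds $\|\Sigma_{k|k}\|,\|\Sigma_{k|k}^{\star}\|\le\sigma$ and $\|\Gamma_{k|k}\|,\|\Gamma_{k|k}^{\star}\|\le\omega$. For $k\ge1$ the identities above give, in the Loewner order, $\Sigma_{k|k}=(\Gamma_{k|k-1}+U)^{-1}\le U^{-1}$ and $\Gamma_{k|k}=\Gamma_{k|k-1}+U\le Q^{-1}+U$, so the iterates never leave a fixed bounded order interval; coupling this with monotonicity of the Riccati operator and its convergence to $\bar\Sigma$ (resp.\ $\bar\Sigma^{\star}$), and adding the $k=0$ terms $\|P\|,\|P^{\star}\|$ coming from the initial data, produces the constants $\sigma$ in \eqref{eq:sigma} and $\omega$ in \eqref{eq:gamma}. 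I expect this to be the main obstacle: the Riccati iteration is not globally monotone -- it becomes monotone only once an iterate is ordered relative to the steady state -- so one must sandwich the sequence carefully between $P$ and $\bar\Sigma$, and it is here that the standing hypotheses $Q>0$ and $U=C^{\top}R^{-1}C>0$ (which make $\|Q^{-1}\|$ and $\|U^{-1}\|$ finite, and which are needed for Proposition~\ref{prop:rim-dist}(3) to apply below) are really used.

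Finally, the per-step contraction. The prediction $\Sigma_{k+1|k}=Q+A\Sigma_{k|k}A^{\top}$ has the form $W+B\Sigma B^{\top}$ with $W=Q>0$, $B=A$, so Proposition~\ref{prop:rim-dist}(3) contracts $\delta(\Sigma_{k|k},\Sigma_{k|k}^{\star})$ by a factor $\alpha/(\alpha+\beta)$ with $\alpha\le\|A\|^{2}\sigma$ and $\beta=\sigma_{\min}(Q)=\|Q^{-1}\|^{-1}$; since $t\mapsto t/(t+\beta)$ is increasing, this factor is at most $\upsilon_{1}=\sigma\|A\|^{2}/(\sigma\|A\|^{2}+\|Q^{-1}\|^{-1})$. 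For the update I would switch to the information form $\Gamma_{k|k}=U+I\,\Gamma_{k|k-1}\,I^{\top}$, again of the shape $W+B\Gamma B^{\top}$ with $W=U>0$, $B=I$; Proposition~\ref{prop:rim-dist}(3) then contracts $\delta(\Gamma_{k|k-1},\Gamma_{k|k-1}^{\star})$ by at most $\upsilon_{2}=\omega/(\omega+\|U^{-1}\|^{-1})$, using $\|\Gamma_{k|k-1}\|\le\|\Gamma_{k|k}\|\le\omega$ because $\Gamma_{k|k}\ge\Gamma_{k|k-1}\ge0$. Since Proposition~\ref{prop:rim-dist}(2) makes $\delta$ invariant under inversion, the Riemannian distance between the two prediction covariances equals that between the two information matrices, so one full cycle (predict, then update) contracts $\delta(\Sigma_{k|k},\Sigma_{k|k}^{\star})$ by $\upsilon=\upsilon_{1}\upsilon_{2}$, exactly the quantity in \eqref{eq:upsilon}. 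Iterating from $k=0$, where $\Sigma_{0|0}=P$ and $\Sigma_{0|0}^{\star}=P^{\star}$, yields \eqref{eq:aux3}, and a final application of Proposition~\ref{prop:rim-dist}(2) converts it into \eqref{eq:aux4}.
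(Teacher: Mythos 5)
Your proposal follows essentially the same route as the paper's proof: the same two recursions $\Sigma_{k+1|k}=A\Sigma_{k|k}A^{\top}+Q$ and $\Gamma_{k|k}=\Gamma_{k|k-1}+U$, Proposition~\ref{prop:rim-dist}-3 applied once in covariance form (prediction) and once in information form (update), Proposition~\ref{prop:rim-dist}-2 to pass between the two and to obtain \eqref{eq:aux4} from \eqref{eq:aux3}, and Riccati monotonicity for the uniform norm bounds; the block-diagonality observation you make explicitly is exactly what the paper means by ``similar relations hold for $\Sigma_{k|l}^{\star}$ and $\Gamma_{k|l}^{\star}$''.

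The one step you gloss over is why the constants in \eqref{eq:sigma}--\eqref{eq:gamma} involve only $\bar{\Sigma}=\lim_{k}\Sigma_{k|k}$ and not $\bar{\Sigma}^{\star}=\lim_{k}\Sigma_{k|k}^{\star}$. The monotonicity argument you invoke bounds the distributed iterates by $\max\{\|P^{\star}\|,\|\bar{\Sigma}^{\star}\|\}$ (and analogously for the inverses), so a priori it yields the enlarged constants $\max\{\|P\|,\|P^{\star}\|,\|\bar{\Sigma}\|,\|\bar{\Sigma}^{\star}\|\}$ and the corresponding larger contraction factor, not $\sigma$, $\omega$ and $\upsilon$ as stated. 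The paper resolves this by carrying those enlarged constants $\tilde{\sigma},\tilde{\omega}$ through the contraction argument and only then observing that the resulting bound $\delta(\Sigma_{k|k},\Sigma_{k|k}^{\star})\leq\tilde{\upsilon}^{k}\delta(P,P^{\star})\rightarrow0$ forces $\bar{\Sigma}^{\star}=\bar{\Sigma}$, which collapses $\tilde{\sigma},\tilde{\omega}$ to $\sigma,\omega$. Your write-up needs this closing identification (or an appeal to uniqueness of the Riccati limit under detectability/stabilizability) before you may assert the bounds directly with $\sigma$ and $\omega$; as written, that assertion is not yet justified. Your auxiliary Loewner bounds $\Sigma_{k|k}\leq U^{-1}$ and $\Gamma_{k|k}\leq Q^{-1}+U$ for $k\geq1$ are correct but are not what produces the stated constants, and the rest of your contraction argument (including $\|\Gamma_{k|k-1}\|\leq\|\Gamma_{k|k}\|\leq\omega$ and the monotonicity of $t\mapsto t/(t+\beta)$) matches the paper.
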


\begin{proof}
Let $\bar{\Sigma}^{\star}=\lim_{k\rightarrow\infty}\Sigma_{k|k}^{\star}$
and 
\begin{eqnarray}
\tilde{\sigma} & = & \max\left\{ \left\Vert P\right\Vert ,\left\Vert P^{\star}\right\Vert ,\left\Vert \bar{\Sigma}\right\Vert ,\left\Vert \bar{\Sigma}^{\star}\right\Vert \right\} ,\label{eq:sigma-alt}\\
\tilde{\omega} & = & \max\left\{ \left\Vert P^{-1}\right\Vert ,\left\Vert P^{\star^{-1}}\right\Vert ,\left\Vert \bar{\Sigma}^{-1}\right\Vert ,\left\Vert \bar{\Sigma}^{\star^{-1}}\right\Vert \right\} .\label{eq:gamma-alt}
\end{eqnarray}
We can then appeal to the fact that the Riccati equation is monotonic
\citet{bitmead1985monotonicity}, to conclude that, for all $k\in\mathbb{N}$,
\begin{eqnarray}
\left\Vert \Sigma_{k|k}\right\Vert  & \leq & \max\left\{ \left\Vert P\right\Vert ,\left\Vert \bar{\Sigma}\right\Vert \right\} \leq\tilde{\sigma},\label{eq:Sigma-bound}\\
\left\Vert \Sigma_{k|k}^{\star}\right\Vert  & \leq & \max\left\{ \left\Vert P^{\star}\right\Vert ,\left\Vert \bar{\Sigma}^{\star}\right\Vert \right\} \leq\tilde{\sigma},\label{eq:Sigma-star-bound}\\
\left\Vert \Gamma_{k|k}\right\Vert  & \leq & \max\left\{ \left\Vert P^{-1}\right\Vert ,\left\Vert \bar{\Sigma}^{-1}\right\Vert \right\} \leq\tilde{\omega},\label{eq:Gamma-bound}\\
\left\Vert \Gamma_{k|k}^{\star}\right\Vert  & \leq & \max\left\{ \left\Vert P^{\star^{-1}}\right\Vert ,\left\Vert \bar{\Sigma}^{\star^{-1}}\right\Vert \right\} \leq\tilde{\omega}.\label{eq:Gamma-star-bound}
\end{eqnarray}
Recall that 
\[
\Sigma_{k+1|k}=A\Sigma_{k|k}A^{\top}+Q.
\]
Also, from~\citep[p. 139]{anderson-moore1979}, we have 
\[
\Gamma_{k|k}=\Gamma_{k|k-1}+U.
\]
Clearly, similar relations hold for $\Sigma_{k|l}^{\star}$ and $\Gamma_{k|l}^{\star}$.
Then, it follows from Proposition~\ref{prop:rim-dist}-3 that, 
\begin{align}
\delta\left(\Sigma_{k+1|k},\Sigma_{k+1|k}^{\star}\right)= & \delta\left(A\Sigma_{k|k}A^{\top}+Q,A\Sigma_{k|k}^{\star}A^{\top}+Q\right)\nonumber \\
\leq & \tilde{\upsilon}_{1}\delta\left(\Sigma_{k|k},\Sigma_{k|k}^{\star}\right),\label{eq:aux1}\\
\delta\left(\Gamma_{k|k},\Gamma_{k|k}^{\star}\right)= & \delta\left(\Gamma_{k|k-1}+U,\Gamma_{k|k-1}^{\star}+U\right)\nonumber \\
\leq & \tilde{\upsilon}_{2}\delta\left(\Gamma_{k|k-1},\Gamma_{k|k-1}^{\star}\right),\label{eq:aux2}
\end{align}
with
\[
\tilde{\upsilon}_{1}=\frac{\tilde{\sigma}\left\Vert A\right\Vert ^{2}}{\tilde{\sigma}\left\Vert A\right\Vert ^{2}+\left\Vert Q^{-1}\right\Vert ^{-1}}\quad\text{ and }\quad\tilde{\upsilon}_{2}=\frac{\tilde{\omega}}{\tilde{\omega}+\left\Vert U^{-1}\right\Vert ^{-1}}.
\]
It then follows from~(\ref{eq:aux1})-(\ref{eq:aux2}) and Proposition~\ref{prop:rim-dist}-2,
that 
\begin{eqnarray*}
\delta\left(\Sigma_{k|k},\Sigma_{k|k}^{\star}\right) & \leq & \tilde{\upsilon}^{k}\delta\left(P,P^{\star}\right),\\
\delta\left(\Gamma_{k|k},\Gamma_{k|k}^{\star}\right) & \leq & \tilde{\upsilon}^{k}\delta\left(P,P^{\star}\right).
\end{eqnarray*}
with $\tilde{\upsilon}=\tilde{\upsilon}_{1}\tilde{\upsilon}_{2}$.
Finally, the above implies that $\bar{\Sigma}^{\star}=\bar{\Sigma}$.
Hence, the parameters $\tilde{\sigma}$ and $\tilde{\omega}$ given
in~(\ref{eq:sigma-alt})-(\ref{eq:gamma-alt}) are equivalent to
$\sigma$ and $\omega$ in~(\ref{eq:sigma})-(\ref{eq:gamma}), respectively,
and the result follows. 
\end{proof}
We now introduce the main result of the section, stating a bound on
$\left\Vert \Sigma_{k|k}-\Sigma_{k|k}^{\star}\right\Vert $. 
\begin{thm}
\label{thm:conv-Sigma} Let $\tilde{\Sigma}_{k|l}=\Sigma_{k|l}-\Sigma_{k|l}^{\star}$
and $\tilde{\Gamma}_{k|l}=\Gamma_{k|l}-\Gamma_{k|l}^{\star}$. Then
\[
\left\Vert \tilde{\Sigma}_{k|k}\right\Vert \leq\kappa\sigma\upsilon^{k}\quad\text{and}\quad\left\Vert \tilde{\Gamma}_{k|k}\right\Vert \leq\kappa\omega\upsilon^{k},
\]
where 
\[
\kappa=e^{\delta\left(P,P^{\star}\right)}-1.
\]
\end{thm}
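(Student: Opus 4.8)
The plan is to read off both bounds as a short consequence of Lemma~\ref{lem:matrix} combined with Proposition~\ref{prop:rim-dist}-4, the only additional input being an elementary convexity inequality. The three facts I would assemble are: (i) the contraction estimates (\ref{eq:aux3})--(\ref{eq:aux4}) of Lemma~\ref{lem:matrix}, namely $\delta(\Sigma_{k|k},\Sigma_{k|k}^{\star})\le\upsilon^{k}\delta(P,P^{\star})$ and $\delta(\Gamma_{k|k},\Gamma_{k|k}^{\star})\le\upsilon^{k}\delta(P,P^{\star})$, together with the uniform norm bounds $\|\Sigma_{k|k}\|,\|\Sigma_{k|k}^{\star}\|\le\sigma$ and $\|\Gamma_{k|k}\|,\|\Gamma_{k|k}^{\star}\|\le\omega$ from the same lemma; (ii) Proposition~\ref{prop:rim-dist}-4, giving $\|P-Q\|\le(e^{\delta(P,Q)}-1)\|Q\|$ for positive definite $P,Q$; and (iii) convexity of $u\mapsto e^{u}$, which yields $e^{tx}-1\le t(e^{x}-1)$ for every $x\ge0$ and every $t\in[0,1]$. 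Fact (iii) will be used with $x=\delta(P,P^{\star})$ and $t=\upsilon^{k}$, which is legitimate because $\upsilon=\upsilon_{1}\upsilon_{2}\in(0,1]$ (each $\upsilon_{i}$ has the form $a/(a+b)$ with $a,b>0$, so $\upsilon^{k}\le1$); this is precisely what turns the prefactor $e^{\upsilon^{k}\delta(P,P^{\star})}-1$ into $\upsilon^{k}\kappa$.

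The deduction is then immediate. Applying fact (ii) to the pair $(\Sigma_{k|k},\Sigma_{k|k}^{\star})$ and bounding the norm on the right by $\sigma$ via Lemma~\ref{lem:matrix} gives $\|\tilde{\Sigma}_{k|k}\|\le(e^{\delta(\Sigma_{k|k},\Sigma_{k|k}^{\star})}-1)\sigma$; inserting (\ref{eq:aux3}) and then (iii) gives $\|\tilde{\Sigma}_{k|k}\|\le(e^{\upsilon^{k}\delta(P,P^{\star})}-1)\sigma\le\upsilon^{k}(e^{\delta(P,P^{\star})}-1)\sigma=\kappa\sigma\upsilon^{k}$. Running the identical three steps on $(\Gamma_{k|k},\Gamma_{k|k}^{\star})$, now using (\ref{eq:aux4}) and $\|\Gamma_{k|k}^{\star}\|\le\omega$, produces $\|\tilde{\Gamma}_{k|k}\|\le(e^{\upsilon^{k}\delta(P,P^{\star})}-1)\omega\le\kappa\omega\upsilon^{k}$, which is the claim.

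The step that requires care — and which I expect to be the only genuine obstacle — is the invocation of Proposition~\ref{prop:rim-dist}-4, since as stated it assumes the positive definite ordering $P>Q$, whereas $\Sigma_{k|k}$ and $\Sigma_{k|k}^{\star}$ are in general not comparable: they are iterates of one and the same Riccati map started from $P$ and from its block-diagonal part $P^{\star}$, and $P-P^{\star}$ (the off-diagonal blocks of $P$) is generically indefinite, so monotonicity of the Riccati recursion does not place $\Sigma_{k|k},\Sigma_{k|k}^{\star}$ (nor $\Gamma_{k|k},\Gamma_{k|k}^{\star}$) in any order. I would dispose of this by observing that the bound of Proposition~\ref{prop:rim-dist}-4 in fact holds for \emph{every} pair of positive definite matrices: writing $P=Q^{1/2}(I+E)Q^{1/2}$ with $E=Q^{-1/2}PQ^{-1/2}-I$ symmetric, the eigenvalues of $I+E$ are those of $PQ^{-1}$, so by the definition of $\delta$ each lies in $[e^{-\delta(P,Q)},e^{\delta(P,Q)}]$, whence $\|E\|\le e^{\delta(P,Q)}-1$ and $\|P-Q\|=\|Q^{1/2}EQ^{1/2}\|\le\|Q\|\,\|E\|\le(e^{\delta(P,Q)}-1)\|Q\|$. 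Since $\delta$ is symmetric in its arguments (Proposition~\ref{prop:rim-dist}-2), one may keep whichever of the two covariance (resp.\ information) norms is convenient on the right, and each is bounded by $\sigma$ (resp.\ $\omega$), which is what the chaining above exploits.
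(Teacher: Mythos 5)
Your proof is correct and follows essentially the same route as the paper: Proposition~\ref{prop:rim-dist}-4 applied to the pairs $(\Sigma_{k|k},\Sigma_{k|k}^{\star})$ and $(\Gamma_{k|k},\Gamma_{k|k}^{\star})$, the contraction bounds (\ref{eq:aux3})--(\ref{eq:aux4}) and norm bounds of Lemma~\ref{lem:matrix}, and the inequality $e^{xy}-1\le\left(e^{x}-1\right)y$, which is exactly Lemma~\ref{lem:aux} in the appendix, so you need not reprove it. Your additional observation is a genuine refinement in rigor: the paper invokes Proposition~\ref{prop:rim-dist}-4 without checking the ordering hypothesis $P>Q$, which indeed need not hold for $\Sigma_{k|k}$ versus $\Sigma_{k|k}^{\star}$, and your direct argument via $E=Q^{-1/2}PQ^{-1/2}-I$ shows the estimate $\left\Vert P-Q\right\Vert \le\left(e^{\delta(P,Q)}-1\right)\left\Vert Q\right\Vert$ holds for an arbitrary pair of positive definite matrices, which is what the paper's proof implicitly relies on.
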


\begin{proof}
Using~(\ref{eq:sigma})-(\ref{eq:gamma}), together with~(\ref{eq:aux3})-(\ref{eq:aux4}),
Proposition~\ref{prop:rim-dist}-4 and Lemma~\ref{lem:aux}, we
obtain 
\begin{eqnarray*}
\left\Vert \tilde{\Sigma}_{k|k}\right\Vert  & \leq & \left(e^{\upsilon^{k}\delta\left(P,P^{\star}\right)}-1\right)\left\Vert \Sigma_{k|k}\right\Vert \leq\kappa\upsilon^{k}\left\Vert \Sigma_{k|k}\right\Vert \\
 & \leq & \kappa\sigma\upsilon^{k},\\
\left\Vert \tilde{\Gamma}_{k|k}\right\Vert  & \leq & \kappa\upsilon^{k}\left\Vert \Gamma_{k|k}\right\Vert \leq\kappa\omega\upsilon^{k}.
\end{eqnarray*}
\end{proof}

\subsection{\label{subsec:Convergence-of-estimates}Convergence of $\hat{x}_{k|k}^{\star}$
to $\hat{x}_{k|k}$}

In this subsection, we study the convergence of state estimate $\hat{x}_{k|k}^{\star}$,
obtained through the distributed method, and that of the centralized
one $\hat{x}_{k|k}$. Moreover, we derive a bound on the error $\hat{x}_{k|k}$-$\hat{x}_{k|k}^{\star}$.
We start by introducing a number of lemmas which are instrumental
for establishing our main results. 
\begin{lem}
\label{lem:diff-dyn} Let $\tilde{x}_{k|l}=\hat{x}_{k|l}-\hat{x}_{k|l}^{\star}$.
Then 
\begin{equation}
\tilde{x}_{k+1|k}=H_{k}\tilde{x}_{k|k-1}+\xi_{k}.\label{eq:error-dynamics}
\end{equation}
where 
\begin{eqnarray*}
H_{k} & = & A\left(I-\Sigma_{k|k}U\right),\\
\xi_{k} & = & a_{k}+b_{k},\\
a_{k} & = & A\Sigma_{k|k}\tilde{\Gamma}_{k|k}\hat{x}_{k|k-1}^{\star},\\
b_{k} & = & A\tilde{\Sigma}_{k|k}\Gamma_{k|k}^{\star}\hat{x}_{k|k}^{\star}.
\end{eqnarray*}
\end{lem}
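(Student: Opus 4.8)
The plan is to pass to the information (inverse--covariance) form of both filters, to notice that the centralized and the distributed recursions are driven by \emph{identical} input terms because $A$, $C$, $R$ and $L$ are block diagonal, and then to isolate the $\tilde{\Sigma}_{k|k}$-- and $\tilde{\Gamma}_{k|k}$--contributions by a routine add--and--subtract manipulation. The first reduction is immediate: the centralized predictor~(\ref{eq:KP1-1}) is $\hat{x}_{k+1|k}=A\hat{x}_{k|k}+Ly_{k}$, and stacking~(\ref{eq:DKP1}) over $i$ gives $\hat{x}_{k+1|k}^{\star}=A\hat{x}_{k|k}^{\star}+Ly_{k}$ with the \emph{same} term $Ly_{k}$; subtracting yields $\tilde{x}_{k+1|k}=A\tilde{x}_{k|k}$, so it suffices to express $\tilde{x}_{k|k}$ through $\tilde{x}_{k|k-1}$ and left--multiply by $A$ at the end.

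For the update I would record the standard information identities: the pair~(\ref{eq:KU1-1})--(\ref{eq:KG-1}) is equivalent to $\Gamma_{k|k}\hat{x}_{k|k}=\Gamma_{k|k-1}\hat{x}_{k|k-1}+C^{\top}R^{-1}y_{k}$ together with $\Gamma_{k|k}=\Gamma_{k|k-1}+U$ (the latter already invoked in the proof of Lemma~\ref{lem:matrix}), and the same two identities hold for the starred quantities with the \emph{same} additive term $C^{\top}R^{-1}y_{k}$, again because $C$ and $R$ are block diagonal. Two consequences will be used. First, $\tilde{\Gamma}_{k|k}=\tilde{\Gamma}_{k|k-1}$, i.e.\ the update does not change the $\Gamma$--discrepancy. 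Second, writing $\zeta_{k}=\Gamma_{k|k}\hat{x}_{k|k}$ and $\zeta_{k}^{\star}=\Gamma_{k|k}^{\star}\hat{x}_{k|k}^{\star}$ for the posterior information vectors, the common term $C^{\top}R^{-1}y_{k}$ cancels, so $\zeta_{k}-\zeta_{k}^{\star}=\Gamma_{k|k-1}\hat{x}_{k|k-1}-\Gamma_{k|k-1}^{\star}\hat{x}_{k|k-1}^{\star}$.

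I would then substitute $\hat{x}_{k|k-1}=\hat{x}_{k|k-1}^{\star}+\tilde{x}_{k|k-1}$ into the last display and use the first consequence to get $\zeta_{k}-\zeta_{k}^{\star}=\tilde{\Gamma}_{k|k}\hat{x}_{k|k-1}^{\star}+\Gamma_{k|k-1}\tilde{x}_{k|k-1}$. Next, splitting $\tilde{x}_{k|k}=\Sigma_{k|k}\zeta_{k}-\Sigma_{k|k}^{\star}\zeta_{k}^{\star}=\Sigma_{k|k}(\zeta_{k}-\zeta_{k}^{\star})+\tilde{\Sigma}_{k|k}\zeta_{k}^{\star}$, inserting the expression for $\zeta_{k}-\zeta_{k}^{\star}$, and simplifying $\Sigma_{k|k}\Gamma_{k|k-1}=\Sigma_{k|k}(\Gamma_{k|k}-U)=I-\Sigma_{k|k}U$ together with $\zeta_{k}^{\star}=\Gamma_{k|k}^{\star}\hat{x}_{k|k}^{\star}$, yields
\[
\tilde{x}_{k|k}=(I-\Sigma_{k|k}U)\tilde{x}_{k|k-1}+\Sigma_{k|k}\tilde{\Gamma}_{k|k}\hat{x}_{k|k-1}^{\star}+\tilde{\Sigma}_{k|k}\Gamma_{k|k}^{\star}\hat{x}_{k|k}^{\star}.
\]
Left--multiplying by $A$ and using $\tilde{x}_{k+1|k}=A\tilde{x}_{k|k}$ reproduces~(\ref{eq:error-dynamics}) with $H_{k}$, $a_{k}$, $b_{k}$ exactly as stated.

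Apart from the set--up, the argument is pure bookkeeping; the one place where care is needed --- and, in my view, the actual content of the lemma --- is the observation that the two filters share the identical predictor input $Ly_{k}$ and the identical information injection $C^{\top}R^{-1}y_{k}$, which rests on the block--diagonal structure of $A$, $C$, $R$ and $L$. This is what collapses the error recursion to the affine form~(\ref{eq:error-dynamics}); the genuine inter--subsystem coupling survives only through the full matrix $\Sigma_{k|k}$ (in contrast to the block--diagonal $\Sigma_{k|k}^{\star}$), hence through $H_{k}$ and the forcing term $\xi_{k}$.
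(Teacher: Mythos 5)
Your proof is correct and takes essentially the same route as the paper's: both pass to the information form, use the invariance of the information-vector discrepancy across the update step (the paper's $\tilde{\gamma}_{k|k}=\tilde{\gamma}_{k|k-1}$, your two "consequences"), split $\tilde{x}_{k|k}=\Sigma_{k|k}\tilde{\gamma}_{k|k}+\tilde{\Sigma}_{k|k}\gamma_{k|k}^{\star}$, and finish with $\Sigma_{k|k}\Gamma_{k|k-1}=I-\Sigma_{k|k}U$ and $\tilde{x}_{k+1|k}=A\tilde{x}_{k|k}$. One small correction to your closing remark: $L$ is not block diagonal (it is the full coupling matrix $\left[L^{(i,j)}\right]$); the shared predictor input $Ly_{k}$ instead rests on each node having access to its neighbors' measurements together with $L^{(i,j)}=0$ for $j\notin\mathcal{N}_{i}$.
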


\begin{proof}
Let $\gamma_{k|l}=\Gamma_{k|l}\hat{x}_{k|l}$, $\gamma_{k|l}^{\star}=\Gamma_{k|l}^{\star}\hat{x}_{k|l}^{\star}$
and $\tilde{\gamma}_{k|l}=\gamma_{k|l}-\gamma_{k|l}^{\star}$. We
can easily obtain 
\[
\tilde{x}_{k+1|k}=A\tilde{x}_{k|k}.
\]
Also, from~\citep[p. 140]{anderson-moore1979}, we obtain
\[
\tilde{\gamma}_{k|k}=\tilde{\gamma}_{k|k-1}.
\]
Then it is easy to check that 
\begin{multline*}
\tilde{x}_{k|k}=\hat{x}_{k|k}-\hat{x}_{k|k}^{\star}=\Sigma_{k|k}\gamma_{k|k}-\Sigma_{k|k}^{\star}\gamma_{k|k}^{\star}\\
=\Sigma_{k|k}\tilde{\gamma}_{k|k}+\tilde{\Sigma}_{k|k}\gamma_{k|k}^{\star},
\end{multline*}
and 
\begin{multline*}
\tilde{\gamma}_{k|k-1}=\gamma_{k|k-1}-\gamma_{k|k-1}^{\star}\\
=\Gamma_{k|k-1}\hat{x}_{k|k-1}-\Gamma_{k|k-1}^{\star}\hat{x}_{k|k-1}^{\star}=\Gamma_{k|k-1}\tilde{x}_{k|k-1}+\tilde{\Gamma}_{k|k-1}\hat{x}_{k|k-1}^{\star}.
\end{multline*}
We then have 
\begin{multline*}
\tilde{x}_{k+1|k}=A\Sigma_{k|k}\tilde{\gamma}_{k|k-1}+A\tilde{\Sigma}_{k|k}\gamma_{k|k}^{\star}=A\Sigma_{k|k}\Gamma_{k|k-1}\tilde{x}_{k|k-1}+\xi_{k}\\
=A\Sigma_{k|k}\left(\Gamma_{k|k}-U\right)\tilde{x}_{k|k-1}+\xi_{k}=H_{k}\tilde{x}_{k|k-1}+\xi_{k}.
\end{multline*}
\end{proof}
\begin{lem}
\label{lem:conv-x} Let 
\begin{equation}
\Delta_{k}=\mathcal{E}\left(\tilde{x}_{k|k-1}\tilde{x}_{k|k-1}^{\top}\right).\label{eq:sigmad}
\end{equation}
\textup{ Then} 
\begin{equation}
\Delta_{k}\leq H_{k}\Delta_{k-1}H_{k}^{\top}+\lambda\upsilon^{k}I,\label{eq:aux5}
\end{equation}
where $I$ is the identity matrix, $\upsilon$ is defined in~(\ref{eq:upsilon}),
and 
\begin{equation}
\lambda\triangleq\sup_{k\in\mathbb{N}}\left(\zeta+2\sqrt{\zeta\left\Vert H_{k}\right\Vert ^{2}\left\Vert \Delta_{k-1}\right\Vert }\right)<\infty,\label{eq:lambda_bounded}
\end{equation}
with 
\begin{eqnarray}
\zeta & = & \left(\alpha+\beta\right)+2\sqrt{\alpha\beta},\nonumber \\
\alpha & = & \kappa^{2}\omega^{2}\sigma^{2}\left\Vert A\right\Vert ^{2}\left(\sigma\left\Vert A\right\Vert ^{2}+\left\Vert Q\right\Vert \right),\label{eq:parameters}\\
\beta & = & \kappa^{2}\omega^{2}\sigma^{3}\left\Vert A\right\Vert ^{2}.\nonumber 
\end{eqnarray}
\end{lem}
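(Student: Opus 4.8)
The plan is to feed the error recursion $\tilde{x}_{k+1|k}=H_{k}\tilde{x}_{k|k-1}+\xi_{k}$ from Lemma~\ref{lem:diff-dyn} into the definition $\Delta_{k}=\mathcal{E}(\tilde{x}_{k|k-1}\tilde{x}_{k|k-1}^{\top})$ and expand the outer product. Up to the obvious shift of index this produces the \emph{propagation} term $H_{k-1}\Delta_{k-1}H_{k-1}^{\top}$, the symmetric \emph{cross} term $H_{k-1}\mathcal{E}(\tilde{x}_{k-1|k-2}\xi_{k-1}^{\top})+(\cdot)^{\top}$, and the \emph{driving} term $\mathcal{E}(\xi_{k-1}\xi_{k-1}^{\top})$. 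Since for symmetric $N$ one has $N\preceq\|N\|I$ and for any $M$ one has $M+M^{\top}\preceq2\|M\|I$, it suffices to bound the operator norms $\|\mathcal{E}(\xi_{k}\xi_{k}^{\top})\|$ and $\|\mathcal{E}(\tilde{x}_{k|k-1}\xi_{k}^{\top})\|$ by the appropriate powers of $\upsilon$ and then collapse everything into a single term $\lambda\upsilon^{k}I$.

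For the driving term I would split $\xi_{k}=a_{k}+b_{k}$ and use the matrix Cauchy--Schwarz inequality $\|\mathcal{E}(uv^{\top})\|\le\|\mathcal{E}(uu^{\top})\|^{1/2}\|\mathcal{E}(vv^{\top})\|^{1/2}$ to obtain $\|\mathcal{E}(\xi_{k}\xi_{k}^{\top})\|\le(\|\mathcal{E}(a_{k}a_{k}^{\top})\|^{1/2}+\|\mathcal{E}(b_{k}b_{k}^{\top})\|^{1/2})^{2}$. Since $a_{k}=A\Sigma_{k|k}\tilde{\Gamma}_{k|k}\hat{x}_{k|k-1}^{\star}$ has a deterministic prefactor, $\|\mathcal{E}(a_{k}a_{k}^{\top})\|\le\|A\|^{2}\|\Sigma_{k|k}\|^{2}\|\tilde{\Gamma}_{k|k}\|^{2}\,\|\mathcal{E}(\hat{x}_{k|k-1}^{\star}\hat{x}_{k|k-1}^{\star\top})\|$; here $\|\Sigma_{k|k}\|\le\sigma$ from Lemma~\ref{lem:matrix}, $\|\tilde{\Gamma}_{k|k}\|\le\kappa\omega\upsilon^{k}$ from Theorem~\ref{thm:conv-Sigma}, and a uniform bound on $\|\mathcal{E}(\hat{x}_{k|k-1}^{\star}\hat{x}_{k|k-1}^{\star\top})\|$ (controlled via the predicted-covariance recursion $\Sigma_{k|k-1}^{\star}=A\Sigma_{k-1|k-1}^{\star}A^{\top}+Q$, hence $\le\sigma\|A\|^{2}+\|Q\|$) together give $\|\mathcal{E}(a_{k}a_{k}^{\top})\|\le\alpha\upsilon^{2k}$ with $\alpha$ as in~(\ref{eq:parameters}). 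The entirely analogous treatment of $b_{k}=A\tilde{\Sigma}_{k|k}\Gamma_{k|k}^{\star}\hat{x}_{k|k}^{\star}$, now using $\|\tilde{\Sigma}_{k|k}\|\le\kappa\sigma\upsilon^{k}$ and $\|\Gamma_{k|k}^{\star}\|\le\omega$, yields $\|\mathcal{E}(b_{k}b_{k}^{\top})\|\le\beta\upsilon^{2k}$. Hence $\|\mathcal{E}(\xi_{k}\xi_{k}^{\top})\|\le(\sqrt{\alpha}+\sqrt{\beta})^{2}\upsilon^{2k}=\zeta\upsilon^{2k}$.

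It then remains to assemble. The cross term is controlled by Cauchy--Schwarz once more, $\|\mathcal{E}(\tilde{x}_{k|k-1}\xi_{k}^{\top})\|\le\|\Delta_{k}\|^{1/2}\|\mathcal{E}(\xi_{k}\xi_{k}^{\top})\|^{1/2}\le\sqrt{\zeta}\,\|\Delta_{k}\|^{1/2}\upsilon^{k}$, so that the sum of the cross and driving terms is $\preceq(2\|H_{k}\|\sqrt{\zeta}\|\Delta_{k}\|^{1/2}\upsilon^{k}+\zeta\upsilon^{2k})I\preceq\upsilon^{k}(\zeta+2\sqrt{\zeta\|H_{k}\|^{2}\|\Delta_{k}\|})I$, where I used $\upsilon^{2k}\le\upsilon^{k}$ (valid since $\upsilon=\upsilon_{1}\upsilon_{2}<1$). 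Taking the supremum over $k$ of the bracket gives exactly the constant $\lambda$ of~(\ref{eq:lambda_bounded}) and hence~(\ref{eq:aux5}). Its finiteness reduces to $\sup_{k}\|H_{k}\|<\infty$, which is immediate from $H_{k}=A(I-\Sigma_{k|k}U)$ and $\|\Sigma_{k|k}\|\le\sigma$, together with $\sup_{k}\|\Delta_{k}\|<\infty$, which follows from $\Delta_{k}\preceq2\,\mathcal{E}(\hat{x}_{k|k-1}\hat{x}_{k|k-1}^{\top})+2\,\mathcal{E}(\hat{x}_{k|k-1}^{\star}\hat{x}_{k|k-1}^{\star\top})$ and the same uniform second-moment bounds.

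The main obstacle is exactly those uniform second-moment bounds on the centralized and distributed state estimates: this is the one ingredient not already packaged in Lemma~\ref{lem:matrix} or Theorem~\ref{thm:conv-Sigma}. The cleanest route is the orthogonality (projection) property of the estimates, which gives $\mathcal{E}(\hat{x}_{k|k-1}^{\star}\hat{x}_{k|k-1}^{\star\top})\preceq\mathcal{E}(x_{k}x_{k}^{\top})$ (and likewise for the filtered and centralized versions), combined with boundedness of $\mathcal{E}(x_{k}x_{k}^{\top})$; one then has to verify that the constants so produced coincide with the $\alpha,\beta$ in the statement. Everything else — the two uses of matrix Cauchy--Schwarz, the conversion of non-symmetric and positive-semidefinite matrices into scalar multiples of $I$, and the deliberate peeling-off of a single surplus factor $\upsilon^{k}$ (rather than $\upsilon^{2k}$) from the cross term so that the inequality closes as a clean Lyapunov-type recursion — is routine bookkeeping.
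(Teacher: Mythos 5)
Your computational skeleton is the same as the paper's: expand the recursion of Lemma~\ref{lem:diff-dyn}, split $\xi_{k}=a_{k}+b_{k}$, use the Cauchy--Schwarz-type bound (Lemma~\ref{lem:PD-bound}) twice to get $\|\mathcal{E}(\xi_{k}\xi_{k}^{\top})\|\le(\sqrt{\alpha}+\sqrt{\beta})^{2}\upsilon^{2k}=\zeta\upsilon^{2k}$ and to absorb the cross term, and then peel off one factor $\upsilon^{k}$ so that everything collapses into $\lambda\upsilon^{k}I$. Up to the index sloppiness (present in the paper as well), that part matches.

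The divergence is exactly at the two points you defer. First, the "main obstacle" you flag is resolved in the paper not by orthogonality against $\mathcal{E}(x_{k}x_{k}^{\top})$ but by bounding $\|\mathcal{E}(\hat{x}_{k|k-1}^{\star}\hat{x}_{k|k-1}^{\star\top})\|$ by $\|\Sigma_{k|k-1}^{\star}\|\le\sigma\|A\|^{2}+\|Q\|$ and $\|\mathcal{E}(\hat{x}_{k|k}^{\star}\hat{x}_{k|k}^{\star\top})\|$ by $\|\Sigma_{k|k}^{\star}\|\le\sigma$; that is precisely how the constants $\alpha,\beta$ in~(\ref{eq:parameters}) arise (whether the estimate's second moment is really dominated by the error covariance is a weakness of the paper's own step, but it is the step the constants come from). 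Your proposed substitute, $\mathcal{E}(\hat{x}^{\star}\hat{x}^{\star\top})\preceq\mathcal{E}(x_{k}x_{k}^{\top})$ plus boundedness of $\mathcal{E}(x_{k}x_{k}^{\top})$, would not reproduce those constants and, more importantly, requires the closed-loop matrix $\tilde{A}=A+LC$ to be stable, an assumption the paper never makes; so as written this ingredient is a genuine gap. Second, for finiteness of $\lambda$ the paper does not use second moments of the estimates at all: it shows $\Delta_{k}\le F_{k}(\Delta_{k-1})$ with $F_{k}$ monotone and a decaying driving term, and argues that some $\bar{\Delta}$ satisfies $F_{k}(\bar{\Delta})<\bar{\Delta}$ for large $k$ (leaning on the asymptotic contractivity of $H_{k}$, cf.\ $\rho(\bar{H})<1$ in the subsequent proposition), whence $\sup_{k}\|\Delta_{k}\|<\infty$. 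Your route via $\Delta_{k}\preceq2\mathcal{E}(\hat{x}_{k|k-1}\hat{x}_{k|k-1}^{\top})+2\mathcal{E}(\hat{x}_{k|k-1}^{\star}\hat{x}_{k|k-1}^{\star\top})$ again rests on the unassumed state-moment bound, so the finiteness claim does not close as stated.
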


\begin{proof}
We split the argument in steps:

Step 1) From Lemmas~\ref{lem:diff-dyn} and~\ref{lem:PD-bound}
\begin{eqnarray*}
\left\Vert \mathcal{E}\left(\xi_{k}\xi_{k}^{\top}\right)\right\Vert  & \leq & \left\Vert \mathcal{E}\left(a_{k}a_{k}^{\top}\right)\right\Vert +\left\Vert \mathcal{E}\left(b_{k}b_{k}^{\top}\right)\right\Vert \\
 & + & 2\sqrt{\left\Vert \mathcal{E}\left(a_{k}a_{k}^{\top}\right)\right\Vert \left\Vert \mathcal{E}\left(b_{k}b_{k}^{\top}\right)\right\Vert }.
\end{eqnarray*}
Now, using Lemma~\ref{lem:matrix},
\begin{multline*}
\left\Vert \mathcal{E}\left(a_{k}a_{k}^{\top}\right)\right\Vert \leq\left\Vert A\right\Vert ^{2}\left\Vert \Sigma_{k|k}\right\Vert ^{2}\left\Vert \tilde{\Gamma}_{k|k}\right\Vert ^{2}\left\Vert \mathcal{E}\left(\hat{x}_{k|k-1}^{\star}\hat{x}_{k|k-1}^{\star^{\top}}\right)\right\Vert \\
\leq\kappa^{2}\omega^{2}\sigma^{2}\left\Vert A\right\Vert ^{2}\left\Vert \Sigma_{k|k-1}^{\star}\right\Vert \upsilon^{2k}\leq\alpha\upsilon^{2k},
\end{multline*}
and 
\begin{eqnarray*}
\left\Vert \mathcal{E}\left(b_{k}b_{k}^{\top}\right)\right\Vert  & \leq & \left\Vert A\right\Vert ^{2}\left\Vert \Gamma_{k|k}^{\star}\right\Vert ^{2}\left\Vert \tilde{\Sigma}_{k|k}\right\Vert ^{2}\left\Vert \mathcal{E}\left(\hat{x}_{k|k}^{\star}\hat{x}_{k|k}^{\star^{\top}}\right)\right\Vert \\
 & \leq & \kappa^{2}\omega^{2}\sigma^{2}\left\Vert A\right\Vert ^{2}\left\Vert \Sigma_{k|k}^{\star}\right\Vert \upsilon^{2k}\leq\beta\upsilon^{2k}.
\end{eqnarray*}
Then 
\[
\left\Vert \mathcal{E}\left(\xi_{k}\xi_{k}^{\top}\right)\right\Vert \leq\zeta\upsilon^{2k}.
\]

Step 2) From~(\ref{eq:error-dynamics}) and Lemma~\ref{lem:PD-bound},
we have 
\begin{multline*}
\Delta_{k}\leq H_{k}\Delta_{k-1}H_{k}{}^{\top}+\mathcal{E}\left(\xi_{k}\xi_{k}^{\top}\right)\\
+2\sqrt{\left\Vert H_{k}\Delta_{k-1}H_{k}\right\Vert \left\Vert \mathcal{E}\left(\xi_{k}^{\top}\xi_{k}\right)\right\Vert }I\leq F_{k}(\Delta_{k-1}),
\end{multline*}
with 
\[
F_{k}(X)=H_{k}XH_{k}^{\top}+\left(\zeta\upsilon^{k}+2\sqrt{\zeta\left\Vert H_{k}\right\Vert ^{2}\left\Vert X\right\Vert }\right)I\upsilon^{k}.
\]
Clearly, if $A>B$ then $F_{k}(A)>F_{k}(B)$. Also, there clearly
exists $\bar{k}$ and $\bar{\Delta}$ such that $F_{k}(\bar{\Delta})<\bar{\Delta}$,
for all $k\geq\bar{k}$. Hence, $\lim_{k\rightarrow\infty}\Delta(k)<\infty$,
and the result follows.
\end{proof}
The following result states a family of upper bounds on the norm of
the covariance matrix of $\tilde{x}_{k|l}$.
\begin{thm}
\label{thm:main} Consider\textup{ $\Delta_{k}$ as defined in \eqref{eq:sigmad}.}
Let $H_{k}=V_{k}J_{k}V_{k}{}^{-1}$ and $\bar{H}=\bar{V}\bar{J}\bar{V}^{-1}$
be the Jordan decompositions of $H_{k}$ and $\bar{H}$, respectively.
Then for every $\epsilon>1$, there exists $k_{\epsilon}\in\mathbb{N}$
such that 
\[
\left\Vert \Delta_{k}\right\Vert \leq A_{\epsilon}\psi_{\epsilon}^{k}+B_{\epsilon}\upsilon^{k},
\]
where 
\[
A_{\epsilon}=\frac{\lambda\psi_{\epsilon}\phi_{\epsilon}}{\psi_{\epsilon}-\upsilon},\qquad B_{\epsilon}=\frac{\lambda\upsilon\phi_{\epsilon}}{\upsilon-\psi_{\epsilon}}.
\]
and 
\begin{eqnarray}
\psi_{\epsilon} & = & \epsilon\rho\left(\bar{H}\right),\qquad\bar{H}=\lim_{k\rightarrow\infty}H_{k},\label{eq:lim}\\
\phi_{\epsilon} & = & \epsilon\left\Vert \bar{V}\right\Vert ^{2}\left\Vert \bar{V}^{-1}\right\Vert ^{2}\left(\frac{m_{\epsilon}}{\epsilon\rho\left(\bar{H}\right)}\right)^{2(k_{\epsilon}-1)},\nonumber \\
m_{\epsilon} & = & \max\left\{ 1,\left\Vert H_{1}\right\Vert ,\cdots,\left\Vert H_{k_{\epsilon}-1}\right\Vert \right\} .\nonumber 
\end{eqnarray}
\end{thm}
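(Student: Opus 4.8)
The plan is to convert Lemma~\ref{lem:conv-x} into a closed-form bound by unrolling the matrix recursion \eqref{eq:aux5} and then controlling the resulting products of the time-varying closed-loop matrices $H_{k}=A(I-\Sigma_{k|k}U)$ by a geometric sequence whose rate can be pushed arbitrarily close to $\rho(\bar H)$. First I would set $\Phi_{k,j}=H_{k}H_{k-1}\cdots H_{j+1}$ for $k>j$ and $\Phi_{k,k}=I$, and iterate \eqref{eq:aux5} to obtain
\[
\Delta_{k}\ \le\ \Phi_{k,j_{0}}\,\Delta_{j_{0}}\,\Phi_{k,j_{0}}^{\top}+\lambda\sum_{j=j_{0}+1}^{k}\Phi_{k,j}\,\Phi_{k,j}^{\top}\,\upsilon^{j}
\]
for the base index $j_{0}$ at which the recursion begins. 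Since all terms are positive semidefinite, taking the spectral norm (using $0\le A\le B\Rightarrow\|A\|\le\|B\|$ and submultiplicativity, with $\|\Phi^{\top}\|=\|\Phi\|$) and absorbing the fixed finite initial contribution into $\lambda$ (finite by \eqref{eq:lambda_bounded}) gives $\|\Delta_{k}\|\le\lambda\sum_{j=0}^{k}\|\Phi_{k,j}\|^{2}\upsilon^{j}$. So the whole theorem reduces to showing that for every $\epsilon>1$ there is $k_{\epsilon}\in\mathbb{N}$ with $\|\Phi_{k,j}\|^{2}\le\phi_{\epsilon}\,\psi_{\epsilon}^{\,k-j}$ for all $k\ge j\ge0$.

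The core of the argument is this product estimate. By Lemma~\ref{lem:matrix} together with Theorem~\ref{thm:conv-Sigma} we have $\Sigma_{k|k}\to\bar\Sigma$, hence $H_{k}\to\bar H=A(I-\bar\Sigma U)$, and $\bar H$ is Schur stable, $\rho(\bar H)<1$ --- this is the asymptotic stability of the centralized Kalman filter that already underlies the existence of $\bar\Sigma$. I would then write $H_{k}=\bar V(\bar J+E_{k})\bar V^{-1}$ with $E_{k}=\bar V^{-1}(H_{k}-\bar H)\bar V\to0$, using the Jordan decomposition of $\bar H$ in $\delta$-scaled form so that $\|\bar J\|$ is as close to $\rho(\bar H)$ as desired. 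Since $\epsilon>1>\rho(\bar H)$, one has $\|\bar J\|<\sqrt{\epsilon\rho(\bar H)}$, so there is $k_{\epsilon}$ with $\|\bar J+E_{k}\|\le\sqrt{\epsilon\rho(\bar H)}$ for all $k\ge k_{\epsilon}$; this is exactly where $\rho(\bar H)<1$ buys the extra half power. Chaining the factors gives, for $k\ge j\ge k_{\epsilon}$,
\[
\|\Phi_{k,j}\|^{2}\ \le\ \|\bar V\|^{2}\|\bar V^{-1}\|^{2}\bigl(\sqrt{\epsilon\rho(\bar H)}\bigr)^{2(k-j)}=\|\bar V\|^{2}\|\bar V^{-1}\|^{2}\psi_{\epsilon}^{\,k-j},
\]
while for $j<k_{\epsilon}$ I would factor $\Phi_{k,j}=\Phi_{k,k_{\epsilon}}(H_{k_{\epsilon}}\cdots H_{j+1})$ and bound the at most $k_{\epsilon}-1$ leading factors crudely by $m_{\epsilon}^{\,k_{\epsilon}-1}$; combining the two estimates and rebalancing the power of $\psi_{\epsilon}$ produces precisely the constant $\phi_{\epsilon}=\epsilon\|\bar V\|^{2}\|\bar V^{-1}\|^{2}(m_{\epsilon}/\psi_{\epsilon})^{2(k_{\epsilon}-1)}$.

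Finally I would insert this into the reduction and sum the geometric series (assuming $\psi_{\epsilon}\ne\upsilon$; if equality occurs for a particular $\epsilon$, replace it by a nearby value), obtaining
\[
\|\Delta_{k}\|\ \le\ \lambda\phi_{\epsilon}\sum_{j=0}^{k}\psi_{\epsilon}^{\,k-j}\upsilon^{j}=\frac{\lambda\phi_{\epsilon}}{\psi_{\epsilon}-\upsilon}\bigl(\psi_{\epsilon}^{\,k+1}-\upsilon^{\,k+1}\bigr)=A_{\epsilon}\psi_{\epsilon}^{\,k}+B_{\epsilon}\upsilon^{\,k},
\]
which is the claimed inequality. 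The main obstacle is the product estimate in the second paragraph: upgrading the merely pointwise convergence $H_{k}\to\bar H$ to a \emph{uniform} geometric bound on the inhomogeneous products $\Phi_{k,j}$ at a rate controlled by $\rho(\bar H)$ rather than by $\sup_{k}\|H_{k}\|$, and keeping careful track of the finitely many ``burn-in'' factors $H_{1},\dots,H_{k_{\epsilon}-1}$ --- which is exactly what the constants $k_{\epsilon}$, $m_{\epsilon}$ and $\phi_{\epsilon}$ record.
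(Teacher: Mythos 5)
Your proposal follows essentially the same route as the paper's proof: unroll the recursion of Lemma~\ref{lem:conv-x} (the paper does this via the majorizing sequence $D_k$ with $D_1=\Delta_1=0$), bound the products of the $H_k$ geometrically at rate $\psi_\epsilon$ with the finitely many burn-in factors recorded in $m_\epsilon$ and $\phi_\epsilon$, and then sum the resulting convolution of the two geometric sequences (the paper performs this last step by a $z$-transform partial-fraction expansion rather than summing directly). The only substantive variation is in the product estimate: the paper works with time-varying Jordan decompositions $H_k=V_kJ_kV_k^{-1}$ and assumes their factors converge to $\bar V,\bar J$, whereas you fix a single (scaled) Jordan basis of $\bar H$ and treat $H_k-\bar H$ as a vanishing perturbation $E_k$, which is, if anything, a slightly more robust way of reaching the same bound.
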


\begin{proof}
We split the argument in steps:

Step 1) Let 
\[
D_{k}=H_{k}D_{k-1}H_{k}{}^{\top}+\lambda I\upsilon^{k}.
\]
with $D_{1}=0$, From~(\ref{eq:aux5}), and since $\Delta_{1}=D_{1}=0$,
it follows that 
\begin{equation}
\Delta_{k}\leq D_{k}.\label{eq:bound}
\end{equation}

Step 2) Let 
\begin{eqnarray*}
\Pi_{l,k} & = & H_{k-1}H_{k-2}\cdots H_{l}\\
 & = & V_{k-1}J_{k-1}V_{k-1}{}^{-1}\cdots V_{l}J_{l}V_{l}{}^{-1}.
\end{eqnarray*}
From~(\ref{eq:lim}), there exists $k_{\epsilon}\in\mathbb{N}$ such
that, for all $k\geq k_{\epsilon}$, 
\begin{eqnarray*}
\left\Vert V_{k}\right\Vert  & \leq & \sqrt{\epsilon}\left\Vert \bar{V}\right\Vert ,\\
\left\Vert V_{k}^{-1}\right\Vert  & \leq & \sqrt{\epsilon}\left\Vert \bar{V}^{-1}\right\Vert ,\\
\left\Vert V_{k+1}{}^{-1}V_{k}\right\Vert  & \leq & \sqrt{\epsilon},\\
\left\Vert J_{k}\right\Vert  & \leq & \sqrt{\epsilon}\rho\left(\bar{H}\right).
\end{eqnarray*}
Then, for all $k\geq l$, 
\begin{eqnarray*}
\left\Vert \Pi_{l,k}\right\Vert  & \leq & m_{\epsilon}^{k_{\epsilon}-l}\left\Vert V_{k-1}\right\Vert \left\Vert J_{k-1}\right\Vert \left\Vert V_{k}{}^{-1}V_{k-1}\right\Vert \times\\
 &  & \cdots\times\left\Vert V_{k_{\epsilon}+1}{}^{-1}V_{k_{\epsilon}}\right\Vert \left\Vert J_{k_{\epsilon}}\right\Vert \left\Vert V_{k_{\epsilon}}\right\Vert \\
 & \leq & \sqrt{\epsilon}\left\Vert \bar{V}\right\Vert \left\Vert \bar{V}^{-1}\right\Vert m_{\epsilon}^{k_{\epsilon}-l}\left(\epsilon\rho\left(\bar{H}\right)\right)^{k-k_{\epsilon}}\\
 & = & \sqrt{\epsilon}\left\Vert \bar{V}\right\Vert \left\Vert \bar{V}^{-1}\right\Vert \left(\frac{m_{\epsilon}}{\epsilon\rho\left(\bar{H}\right)}\right)^{k_{\epsilon}-l}\left(\epsilon\rho\left(\bar{H}\right)\right)^{k-l}\\
 & \leq & \sqrt{\phi_{\epsilon}}\psi_{\epsilon}^{\frac{k-l}{2}}.
\end{eqnarray*}

Step 3) We have 
\[
D_{k+1}=\lambda\sum_{l=1}^{k}\upsilon^{l}\Pi_{l,k}\Pi_{l,k}^{\top}.
\]
Let $d_{k}=\left\Vert D_{k}\right\Vert $. Then 
\begin{multline*}
d_{k}\leq\lambda\sum_{l=1}^{k}\upsilon^{l}\left\Vert \Pi_{l,k}\Pi_{l,k}^{\top}\right\Vert \leq\lambda\phi_{\epsilon}\sum_{l=1}^{k}\upsilon^{l}\psi_{\epsilon}^{k-l}=\left(h_{k}\ast u_{k}\right),
\end{multline*}
with 
\[
h_{k}=\phi_{\epsilon}\psi_{\epsilon}^{k}\qquad\text{and}\qquad u_{k}=\lambda\upsilon^{k}.
\]
Taking $z$-transform we get 
\begin{multline*}
d(z)=h(z)u(z)=\frac{\lambda\phi_{\epsilon}}{\left(1-\psi_{\epsilon}z^{-1}\right)\left(1-\upsilon z^{-1}\right)}\\
=\frac{A_{\epsilon}}{1-\psi_{\epsilon}z^{-1}}+\frac{B_{\epsilon}}{1-\upsilon z^{-1}}.
\end{multline*}
Hence, 
\[
d_{k}=A_{\epsilon}\psi_{\epsilon}^{k}+B_{\epsilon}\upsilon^{k},
\]
and the result follows from the definition of $d_{k}$ and~(\ref{eq:bound}).
\end{proof}
Theorem~\ref{thm:main} states that the covariance of the difference
between $\hat{x}_{k|k-1}^{\star}$ and $\hat{x}_{k|k-1}$ is bounded
by two exponential terms. The term $B_{\epsilon}\upsilon^{k}$ is
due to the convergence of the Kalman gain $K_{k}^{\star}$ to $K_{k}$,
while the term $A_{\epsilon}\psi_{\epsilon}^{k}$ is due to the convergence
of the states given by the system dynamics. In order to use this result
to show the asymptotic convergence of $\hat{x}_{k|k-1}^{\star}$ to
$\hat{x}_{k|k-1}$, we need that $\upsilon<1$ and $\psi_{\epsilon}<1$,
for some $\epsilon>0$. While it is clear from~(\ref{eq:upsilon})
that the former is true, guaranteeing the latter is not that straightforward.
The following proposition addresses this issue.
\begin{prop}
If the pair $\left[A,C\right]$ is completely detectable and the pair
$\left[A,Q^{1/2}\right]$ is completely stabilizable, then $\rho\left(\bar{H}\right)<1,$
where $\rho(\bar{H})$ denotes the spectral radius of matrix $\bar{H}$.
\end{prop}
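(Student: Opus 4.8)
The plan is to identify $\bar{H}$ with the closed-loop matrix of the steady-state Kalman \emph{predictor} and then invoke the classical stability theory for the algebraic Riccati equation. First I would introduce the steady-state prediction covariance $\bar{P}=\lim_{k\to\infty}\Sigma_{k|k-1}$; it exists because $\Sigma_{k|k}$ converges to $\bar{\Sigma}$ (a fact already used in Lemma~\ref{lem:matrix}, and guaranteed here by detectability of $[A,C]$) and $\Sigma_{k+1|k}=A\Sigma_{k|k}A^{\top}+Q$, so that $\bar{P}=A\bar{\Sigma}A^{\top}+Q$. Writing the steady-state measurement update as $\bar{\Sigma}=(I-\bar{K}_{f}C)\bar{P}$ with $\bar{K}_{f}=\bar{P}C^{\top}(C\bar{P}C^{\top}+R)^{-1}$, a short manipulation gives $\bar{\Sigma}C^{\top}=\bar{P}C^{\top}(C\bar{P}C^{\top}+R)^{-1}R$, hence $\bar{\Sigma}C^{\top}R^{-1}=\bar{K}_{f}$ and therefore $\bar{\Sigma}U=\bar{\Sigma}C^{\top}R^{-1}C=\bar{K}_{f}C$. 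Substituting into $\bar{H}=A(I-\bar{\Sigma}U)$ yields
\[
\bar{H}=A(I-\bar{K}_{f}C)=A-\bar{K}_{p}C,\qquad \bar{K}_{p}:=A\bar{K}_{f}=A\bar{P}C^{\top}\bigl(C\bar{P}C^{\top}+R\bigr)^{-1},
\]
that is, $\bar{H}$ is exactly the predictor-form closed-loop matrix. This step uses only $R>0$, so it needs no invertibility of $\bar{P}$ or $\bar{\Sigma}$.

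Next I would eliminate $\bar{\Sigma}$ between $\bar{P}=A\bar{\Sigma}A^{\top}+Q$ and $\bar{\Sigma}=(I-\bar{K}_{f}C)\bar{P}$ to see that $\bar{P}$ is a positive semidefinite solution of the discrete algebraic Riccati equation
\[
\bar{P}=A\bar{P}A^{\top}+Q-A\bar{P}C^{\top}\bigl(C\bar{P}C^{\top}+R\bigr)^{-1}C\bar{P}A^{\top}.
\]
Under the stated hypotheses --- $[A,C]$ completely detectable and $[A,Q^{1/2}]$ completely stabilizable --- the classical theory of this equation (see, e.g.,~\citep{anderson-moore1979}) guarantees that it has a unique stabilizing positive semidefinite solution and that the associated closed-loop matrix $A-\bar{K}_{p}C$ is asymptotically stable, i.e.\ $\rho(A-\bar{K}_{p}C)<1$. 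Combining this with the identity of the previous paragraph gives $\rho(\bar{H})=\rho(A-\bar{K}_{p}C)<1$, which is the assertion.

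The algebraic steps --- the identity $\bar{\Sigma}U=\bar{K}_{f}C$ and the derivation of the Riccati equation --- are routine rearrangements of the filter recursions already displayed above, so I would not dwell on them. The real content, and the only place where care is needed, is matching the detectability/stabilizability assumptions to the precise hypotheses of the Riccati convergence/stability theorem being cited: I would state explicitly which version is used and verify that ``completely detectable $[A,C]$'' together with ``completely stabilizable $[A,Q^{1/2}]$'' coincide with the conditions under which the steady-state Kalman filter is known to be stable. That verification, together with the cited theorem, is the crux; everything else is bookkeeping.
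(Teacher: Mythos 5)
Your proof is correct, and it follows the same overall strategy as the paper: identify $\bar{H}$ with the steady-state closed-loop matrix of the centralized Kalman predictor, then invoke the classical detectability/stabilizability stability theory. The difference is in how the identification is made. The paper argues indirectly: it freezes the gains at $\bar{K}$ (justified by Theorem~\ref{thm:conv-Sigma}), writes the predictor-form recursions for $\hat{x}_{k+1|k}$ and $\hat{x}_{k+1|k}^{\star}$, and compares the resulting error dynamics $\tilde{x}_{k+1|k}=A(I-\bar{K}C)\tilde{x}_{k|k-1}$ with the expression $\tilde{x}_{k+1|k}=\bar{H}\tilde{x}_{k|k-1}$ coming from Lemma~\ref{lem:diff-dyn}, concluding $\bar{H}=A(I-\bar{K}C)$. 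You instead derive the same identity purely algebraically, via the standard gain relation $\bar{\Sigma}C^{\top}R^{-1}=\bar{K}_{f}$ (equivalently $\Sigma_{k|k}C^{\top}R^{-1}=K_{k}$), so that $A(I-\bar{\Sigma}U)=A(I-\bar{K}_{f}C)$, and you make explicit that $\bar{P}=A\bar{\Sigma}A^{\top}+Q$ solves the discrete algebraic Riccati equation, which is what licenses the citation of the stabilizing-solution theorem. Your route is arguably cleaner and more self-contained (no counterfactual ``if the gains were equal for all $k$'' argument, and no invertibility assumptions beyond $R>0$), while the paper's route reuses machinery it has already set up; both share the same mild gap, namely that the existence of the limits $\bar{\Sigma}$ and hence $\bar{H}$ is taken for granted rather than deduced from the detectability/stabilizability hypotheses, which is the one point you flag correctly and should state as following from the same classical Riccati convergence theory you cite.
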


\begin{proof}
Let $K_{k}^{\star}=\mathrm{diag}\left(K_{k}^{(i)}:i=1,\cdots,N\right)$.
From Theorem~\ref{thm:conv-Sigma}, 
\[
\lim_{k\rightarrow\infty}K_{k}=\lim_{k\rightarrow\infty}K_{k}^{\star}\triangleq\bar{K}.
\]
Now, 
\begin{eqnarray*}
\hat{x}_{k+1|k} & = & A\left(I-K_{k}C\right)\hat{x}_{k|k-1}+\left(AK_{k}+L\right)y_{k},\\
\hat{x}_{k+1|k}^{\star} & = & A\left(I-K_{k}^{\star}C\right)\hat{x}_{k|k-1}^{\star}+\left(AK_{k}^{\star}+L\right)y_{k}.
\end{eqnarray*}
Hence, if we had that $K_{k}=K_{k}^{\star}=\bar{K}$, for all $k\in\mathbb{N}$,
then 
\[
\tilde{x}_{k+1|k}=A\left(I-\bar{K}C\right)\tilde{x}_{k|k-1}.
\]
However, under the same assumption, according to Lemma~\ref{lem:diff-dyn},
$\tilde{x}_{k+1|k}=\bar{H}\tilde{x}_{k|k-1}$. Hence, 
\[
\bar{H}=A\left(I-\bar{K}C\right).
\]
i.e., $\bar{H}$ equals the matrix that determines the asymptotic
dynamics of the centralized Kalman filter's estimation error. Then,
in view of the model~(\ref{eq:ss1})-(\ref{eq:ss2}), the result
follows from~\citep[S 4.4]{anderson-moore1979}.
\end{proof}

\subsection{The case when the initial covariance is block diagonal }

It turns out that, when the initial covariance matrix has a block
diagonal structure both estimation methods are completely identical.
This is summarized in the following corollary. 
\begin{cor}
Consider the network of subsystems~(\ref{eq:sys1})-(\ref{eq:sys2}).
If the matrix $P$ is block diagonal, then the distributed Kalman
filter scheme~(\ref{eq:DKP1})-(\ref{eq:DKG}) produces, for each
$i$, the same estimate as the centralized Kalman filter~(\ref{eq:KP1-1})-(\ref{eq:KG-1}). 
\end{cor}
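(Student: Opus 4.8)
The plan is to show that when $P$ is block diagonal, the centralized filter's covariance $\Sigma_{k|k}$ stays block diagonal for all $k$, and that under this circumstance the centralized recursions collapse exactly onto the distributed ones. First I would argue by induction on $k$ that $\Sigma_{k|k}$ is block diagonal, i.e.\ $\Sigma_{k|k}=\Sigma_{k|k}^{\star}$. The base case is the hypothesis $\Sigma_{0|0}=P=P^{\star}$. For the inductive step, note from~\eqref{eq:KP2-1} that $\Sigma_{k+1|k}=A\Sigma_{k|k}A^{\top}+Q$; since $A=\mathrm{diag}(A^{(i)})$ and $Q=\mathrm{diag}(Q^{(i)})$ are block diagonal and $\Sigma_{k|k}$ is block diagonal by hypothesis, $\Sigma_{k+1|k}$ is block diagonal, and its $i$-th block is exactly $A^{(i)}\Sigma_{k|k}^{(i)}A^{(i)\top}+Q^{(i)}$, matching~\eqref{eq:DKP2}. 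Then from~\eqref{eq:KG-1} the gain $K_{k+1}=\Sigma_{k+1|k}C^{\top}(C\Sigma_{k+1|k}C^{\top}+R)^{-1}$ is block diagonal because $C$ and $R$ are block diagonal and the inverse of a block-diagonal positive definite matrix is block diagonal; its $i$-th block coincides with~\eqref{eq:DKG}. Finally~\eqref{eq:KU2-1} gives $\Sigma_{k+1|k+1}=(I-K_{k+1}C)\Sigma_{k+1|k}$, again block diagonal with $i$-th block matching~\eqref{eq:DKU2}. This closes the induction and shows $\Sigma_{k|k}^{\star}=\Sigma_{k|k}$, $K_{k}^{\star}=K_{k}$ for all $k$.

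With the covariances and gains identified, the second part is to show the state estimates agree. I would again induct on $k$, with base case $\hat{x}_{0|0}=0=\hat{x}_{0|0}^{\star}$. Assume $\hat{x}_{k|k}=\hat{x}_{k|k}^{\star}$. The centralized prediction~\eqref{eq:KP1-1} reads $\hat{x}_{k+1|k}=A\hat{x}_{k|k}+Ly_{k}$; writing this blockwise and using $A=\mathrm{diag}(A^{(i)})$ together with $(Ly_k)^{(i)}=\sum_{j\in\mathcal N_i}L^{(i,j)}y_k^{(j)}$ recovers exactly~\eqref{eq:DKP1}, so $\hat{x}_{k+1|k}=\hat{x}_{k+1|k}^{\star}$ blockwise. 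Then the centralized update~\eqref{eq:KU1-1}, since $K_{k+1}$ and $C$ are block diagonal, decomposes into $I$ independent updates each of the form~\eqref{eq:DKU1}, giving $\hat{x}_{k+1|k+1}=\hat{x}_{k+1|k+1}^{\star}$. This completes the induction.

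The only subtlety — and the place I would be most careful — is the relationship between the measurement data feeding the two filters: the distributed filter at node $i$ uses $\{y_m^{(j)}:j\in\mathcal N_i\cup\{i\},\,m\le l\}$ (per~\eqref{eq:mllast-1}), whereas the centralized filter uses all of $Y_l$. One must check that the extra data do not change the centralized estimate when $P$ is block diagonal; but this is exactly what the recursion argument above establishes, since the propagated quantities never mix blocks. A clean way to present the whole corollary is therefore simply to observe that under $P=P^{\star}$ the substitution $\Sigma_{k|k}\mapsto\Sigma_{k|k}^{\star}$, $\hat{x}_{k|k}\mapsto\hat{x}_{k|k}^{\star}$ turns the centralized recursions~\eqref{eq:KP1-1}--\eqref{eq:KG-1} into $I$ decoupled copies of~\eqref{eq:DKP1}--\eqref{eq:DKG}, and invoke uniqueness of the solution of the recursion. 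Alternatively, and even more briefly, note that by Theorem~\ref{thm:conv-Sigma} we have $\kappa=e^{\delta(P,P^{\star})}-1=0$ when $P=P^{\star}$, hence $\tilde\Sigma_{k|k}=0$ and $\tilde\Gamma_{k|k}=0$ for all $k$; then in Lemma~\ref{lem:diff-dyn} the driving term $\xi_k=a_k+b_k$ vanishes identically, so $\tilde x_{k+1|k}=H_k\tilde x_{k|k-1}$ with $\tilde x_{0|0}=0$, forcing $\tilde x_{k|k}=0$ for all $k$, i.e.\ the two estimates coincide. I would present this last route as the main proof, as it reuses the machinery already built.
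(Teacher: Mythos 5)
Your first argument is exactly the paper's proof: induct on $k$ to show that, since $A$, $Q$, $C$, $R$ are block diagonal, the recursions \eqref{eq:KP2-1}, \eqref{eq:KG-1}, \eqref{eq:KU2-1} preserve block diagonality of $\Sigma_{k|k-1}$, $K_{k}$, $\Sigma_{k|k}$ starting from $\Sigma_{0|0}=P$, and then observe that with block-diagonal covariances and gains the centralized recursions \eqref{eq:KP1-1}--\eqref{eq:KG-1} decouple into the $I$ copies \eqref{eq:DKP1}--\eqref{eq:DKG}, so the estimates coincide; this is correct and is the route the authors take. The alternative you say you would present as the main proof (use $P=P^{\star}$ to get $\delta(P,P^{\star})=0$, hence $\kappa=0$ in Theorem~\ref{thm:conv-Sigma}, hence $\tilde{\Sigma}_{k|k}=\tilde{\Gamma}_{k|k}=0$, so $\xi_{k}=0$ in Lemma~\ref{lem:diff-dyn} and $\tilde{x}_{k|k-1}=0$ propagates from the zero initial error) is a legitimate shortcut that reuses the convergence machinery, but it is strictly less self-contained: Theorem~\ref{thm:conv-Sigma} and Lemma~\ref{lem:matrix} implicitly require $P>0$ (so that $\delta(P,P^{\star})$ and $P^{-1}$ are defined), invertibility of $Q$ and of $U=C^{\top}R^{-1}C$, and existence of the limit $\bar{\Sigma}$ (i.e.\ detectability/stabilizability-type conditions), and Lemma~\ref{lem:diff-dyn} works with the information-form quantities $\Gamma_{k|l}$, none of which is needed for the statement itself; moreover you would still need the small extra step from $\tilde{x}_{k|k-1}=0$ and $\tilde{\Gamma}_{k|k-1}=0$ to $\tilde{x}_{k|k}=0$. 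So the elementary induction buys generality (it holds for any block-diagonal $P\geq0$ and needs no asymptotic assumptions), while your preferred route buys brevity at the cost of inheriting those hypotheses; the paper opts for the former.
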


\begin{proof}
Recall that matrices $A$, $Q$, $C$ and $R$ are all block diagonal.
It then follows from~(\ref{eq:KP2-1}) that, if $\Sigma_{k|k}$ is
block diagonal, so is $\Sigma_{k+1|k}$. One can easily check from~(\ref{eq:KU2-1})
and~(\ref{eq:KG-1}) that the same holds for $K_{k}$ and $\Sigma_{k|k}$
if $\Sigma_{k|k-1}$ is block diagonal. Since $\Sigma_{1|0}=P$ is
block diagonal, it follows that the matrices $\Sigma_{k|k-1}$ and
$\Sigma_{k|k}$ remain block diagonal for all $k$. Now, it is straightforward
to verify that~(\ref{eq:KP1-1})-(\ref{eq:KG-1}) become equivalent
to~(\ref{eq:DKP1})-(\ref{eq:DKG}), when $\Sigma_{k|k}$ and $\Sigma_{k|k-1}$
are block diagonal. Hence, the distributed and centralized Kalman
filters produce the same estimate and the result follows. 
\end{proof}

\section{Simulations\label{sec:Simulations}}

In this section, we present numerical results to compare the performance
of the proposed distributed scheme~(\ref{eq:DKP1})-(\ref{eq:DKG})
and the optimal centralized scheme~(\ref{eq:KP1-1})-(\ref{eq:KG-1}).
We assume that each subsystem is a single integrator with a pole at
$0.2$. The communication graph is undirected, as in Fig.~\ref{fig:fiveagents },
and the nonzero values of $L^{(i,j)}$ are set to $0.3$. Furthermore,
$v_{k}\sim\mathcal{N}\left(0,0.1I_{5}\right)$ and $w_{k}\sim\mathcal{N}\left(0,0.1I_{5}\right)$.

\begin{figure}
\centering{}\includegraphics[width=0.5\columnwidth]{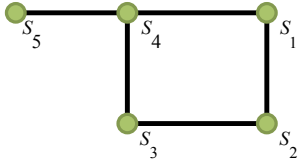}\caption{Communication graph of the five subsystems $S_{i}$, $i=1,2,\ldots,5$.
\label{fig:fiveagents }}
\end{figure}

We now compare the state estimation produced by the proposed distributed
scheme with that produced by the centralized one. To this end, we
set the initial conditions for both schemes to be the same, i.e.,
$\hat{x}_{0|0}=\hat{x}_{0|0}^{\star}=0$.

In the first simulation, the initial covariance matrix is chosen by
randomly generating a positive-definite matrix using $P=LL^{\top}+\epsilon_0 I_{5}$,
where $\epsilon_0=0.1$ and the entries $L\in\mathbb{R}^{5\times5}$
are drawn from the uniform distribution $\mathcal{U}(0,1)$. Fig.~\ref{fig:The-estimation-differences-dense}
shows the time evolution of the entries of the difference $\tilde{x}_{k|k-1}=\hat{x}_{k|k-1}^{\star}-\hat{x}_{k|k-1}$
between the estimation outcome $\hat{x}_{k|k-1}^{\star}$ of the distributed
Kalman filter and that $\hat{x}_{k|k-1}$ of the centralized one.
We see that how this difference asymptotically converges to zero.

\begin{figure}
\centering{}\includegraphics[width=0.7\columnwidth]{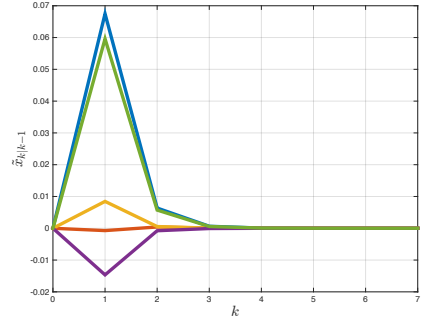}\caption{Difference between the estimated states obtained via both filtering
schemes.\label{fig:The-estimation-differences-dense}}
\end{figure}

In the second simulation we evaluate the same difference in the case
where the initial covariance matrix is block-diagonal. To this end,
we choose $P=\epsilon_1 I_{5}$, where $\epsilon_1$
is a random scaler drawn from the uniform distribution
$\mathcal{U}(0,1)$. The time evolution entries of the difference
$\tilde{x}_{k|k-1}$ are shown in Fig.~\ref{fig:The-estimation-differences}.
We see that these differences are very negligible, only due to numerical
errors. This confirms our claim that the estimates of both Kalman
filter schemes are the same when the matrix $P$ is block-diagonal.

\begin{figure}
\centering{}\includegraphics[width=0.7\columnwidth]{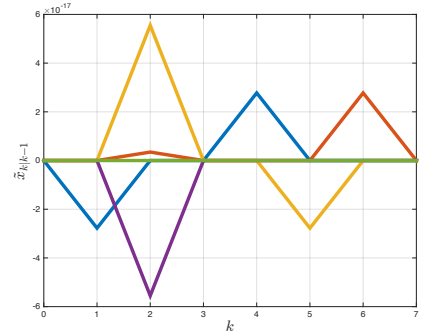}\caption{Difference between the estimated states obtained via both filtering
schemes, when $P$ is block-diagonal.\label{fig:The-estimation-differences}}
\end{figure}

\section{Conclusion\label{sec:Conclusion}}

We studied the distributed Kalman filter problem in a network of linear
time-invariant subsystems. We proposed a distributed Kalman filter
scheme, which uses only local measurements, and we studied the extent
to which this scheme approximates the centralized (i.e., optimal)
Kalman filter. It turned out that the covariance matrix associated
with the initial value of the state vector plays an important role.
We showed that if this matrix is block diagonal, the proposed distributed
scheme is optimal. Moreover, if that condition is dropped, the estimation
error covariances, and the associated estimates, obtained through
these two approaches approximate each other exponentially fast. We
also established proper bounds on error between estimates and its
covariance matrix.
\vspace{-4mm}
\appendix

\section{Some lemmas}
\begin{lem}
\label{lem:aux}\citep[Lemma 25]{SuiMarelliFuBP} For every $x\in\mathbb{R}$
and $0\leq y\leq1$, $e^{xy}-1\leq\left(e^{x}-1\right)y$.
\end{lem}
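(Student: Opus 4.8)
The plan is to reduce the inequality to the convexity of a single exponential. Fix $x\in\mathbb{R}$ and consider the map $g:[0,1]\to\mathbb{R}$ defined by $g(y)=e^{xy}$. First I would note that $g''(y)=x^{2}e^{xy}\geq0$, so $g$ is convex on $[0,1]$. Then I would write the point $y\in[0,1]$ as the convex combination $y=(1-y)\cdot 0+y\cdot 1$ of the endpoints, so that convexity gives $g(y)\leq(1-y)g(0)+y\,g(1)$, i.e. $e^{xy}\leq(1-y)+y\,e^{x}$. Subtracting $1$ from both sides and rearranging yields $e^{xy}-1\leq y\,(e^{x}-1)$, which is exactly the claim. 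The case $x=0$ is trivial, since then both sides equal $0$.

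If one prefers not to invoke convexity as a black box, an equivalent route is to set $f(y)=(e^{x}-1)y-(e^{xy}-1)$ on $[0,1]$ and check directly that $f(0)=f(1)=0$ while $f''(y)=-x^{2}e^{xy}\leq0$; a concave function on $[0,1]$ vanishing at both endpoints is nonnegative throughout the interval, so $f(y)\geq0$, which is again the desired inequality. Both arguments are two or three lines and require no estimates beyond elementary calculus.

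I do not expect any genuine obstacle: the statement is simply the chord-above-graph property of the exponential restricted to the segment $[0,x]$. The only point requiring a little care is to express $y$ as a convex combination of the correct pair of points ($0$ and $1$), rather than of some unrelated pair, so that $g(0)=1$ and $g(1)=e^{x}$ drop out cleanly. Since the lemma is in any case quoted from \citep[Lemma 25]{SuiMarelliFuBP}, it would also be acceptable to state only this one-line justification and refer to that reference for completeness.
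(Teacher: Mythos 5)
Your argument is correct. Note that the paper itself gives no proof of this lemma at all: it is simply quoted as \citep[Lemma 25]{SuiMarelliFuBP}, so there is nothing in the paper to compare against step by step. Your convexity argument is the standard one and is sound: $g(y)=e^{xy}$ satisfies $g''(y)=x^{2}e^{xy}\geq 0$, hence writing $y=(1-y)\cdot 0+y\cdot 1$ gives $e^{xy}\leq(1-y)+y\,e^{x}$, which rearranges to $e^{xy}-1\leq\left(e^{x}-1\right)y$; your alternative via $f(y)=(e^{x}-1)y-(e^{xy}-1)$ with $f(0)=f(1)=0$ and $f''\leq 0$ is equally valid. Either version makes the lemma self-contained rather than an external citation, which is a small but genuine improvement; as you say, simply deferring to the cited reference would also have been acceptable in context.
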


\begin{lem}
\label{lem:PD-bound}\citep[Lemma 26]{SuiMarelliFuBP} If $\left[\begin{array}{cc}
A & B^{\top}\\
B & C
\end{array}\right]\geq0$, then $\left\Vert B\right\Vert \leq\sqrt{\left\Vert A\right\Vert \left\Vert C\right\Vert }$. 
\end{lem}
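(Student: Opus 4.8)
The plan is to reduce this matrix bound to a single scalar inequality obtained by probing the positive semidefinite block matrix with a suitably chosen two‑block vector. Write $M=\left[\begin{array}{cc}A & B^{\top}\\ B & C\end{array}\right]\geq0$, and recall that the spectral norm satisfies $\|B\|=\max_{\|u\|=\|v\|=1}v^{\top}Bu$. First I would fix unit vectors $u_{\star},v_{\star}$ attaining this maximum, and note that testing $M\geq0$ against vectors of the form $(u^{\top},0)^{\top}$ and $(0,v^{\top})^{\top}$ gives $A\geq0$ and $C\geq0$; consequently $\|A\|=\lambda_{\max}(A)$ and $\|C\|=\lambda_{\max}(C)$, so the Rayleigh quotient yields $u_{\star}^{\top}Au_{\star}\leq\|A\|$ and $v_{\star}^{\top}Cv_{\star}\leq\|C\|$.

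Next I would test $M\geq0$ against the vector $(\alpha u_{\star}^{\top},\beta v_{\star}^{\top})^{\top}$ for arbitrary real scalars $\alpha,\beta$, obtaining
\[
\alpha^{2}\bigl(u_{\star}^{\top}Au_{\star}\bigr)+2\alpha\beta\bigl(v_{\star}^{\top}Bu_{\star}\bigr)+\beta^{2}\bigl(v_{\star}^{\top}Cv_{\star}\bigr)\geq0 .
\]
Nonnegativity of this quadratic form in $(\alpha,\beta)$ forces the associated $2\times2$ symmetric matrix (with diagonal entries $u_{\star}^{\top}Au_{\star}$, $v_{\star}^{\top}Cv_{\star}$ and off‑diagonal entry $v_{\star}^{\top}Bu_{\star}$) to be positive semidefinite, hence its determinant is nonnegative, i.e.
\[
\bigl(v_{\star}^{\top}Bu_{\star}\bigr)^{2}\leq\bigl(u_{\star}^{\top}Au_{\star}\bigr)\bigl(v_{\star}^{\top}Cv_{\star}\bigr) .
\]
Combining this with the Rayleigh‑quotient bounds of the previous step gives $\|B\|^{2}=\bigl(v_{\star}^{\top}Bu_{\star}\bigr)^{2}\leq\|A\|\,\|C\|$, which is exactly the claimed inequality.

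I do not expect a genuine obstacle here; the argument is elementary. The only points needing care are, first, the passage from nonnegativity of the scalar quadratic form to the determinant inequality (equivalently one may set $\alpha=1$ and minimise over $\beta$), and second, that one really uses positive semidefiniteness of the full block $M$, not merely its symmetry, in order to identify $\|A\|$ and $\|C\|$ with the largest eigenvalues of $A$ and $C$. An alternative but slightly less clean route, valid when $C>0$, is to invoke the Schur complement $A-B^{\top}C^{-1}B\geq0$ together with $C^{-1}\geq\|C\|^{-1}I$, which gives $\|A\|\geq\|B^{\top}C^{-1}B\|\geq\|C\|^{-1}\|B\|^{2}$, and then to treat a general $C\geq0$ by replacing it with $C+\varepsilon I$ and letting $\varepsilon\downarrow0$.
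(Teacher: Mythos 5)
Your argument is correct and complete: the variational characterization $\left\Vert B\right\Vert =\max_{\Vert u\Vert=\Vert v\Vert=1}v^{\top}Bu$, the probing of the block matrix with $(\alpha u_{\star}^{\top},\beta v_{\star}^{\top})^{\top}$, the resulting $2\times2$ determinant inequality, and the Rayleigh-quotient bounds $u_{\star}^{\top}Au_{\star}\leq\Vert A\Vert$, $v_{\star}^{\top}Cv_{\star}\leq\Vert C\Vert$ (valid because the diagonal blocks of a symmetric positive semidefinite matrix are themselves symmetric positive semidefinite) fit together without gaps, and your Schur-complement variant with the $\varepsilon$-regularization of $C$ is also sound. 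Note, however, that the paper itself gives no proof of this statement: it is quoted verbatim as Lemma 26 of the cited reference \citep{SuiMarelliFuBP}, so there is no in-paper argument to compare yours against; your proof is a standard, self-contained justification of exactly the kind the authors delegate to that reference.
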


\vspace{-4mm}
\bibliographystyle{plainnat}
\bibliography{mohsenbib2}

\end{document}